\documentclass[11pt, a4paper, onecolumn]{article}
\usepackage{textcomp}

\usepackage{amsmath}

\usepackage{subcaption}
\usepackage[left=1.6cm, right=1.6cm, top=1.6cm, bottom=2cm]{geometry}
 
\usepackage{graphicx}
\usepackage[utf8]{inputenc} 
\usepackage[T1]{fontenc}    
\usepackage{hyperref}       
\usepackage{url}            
\usepackage{booktabs}       
\usepackage{amsfonts}       
\usepackage{nicefrac}       
\usepackage{microtype}      
\usepackage{xcolor}         
\usepackage{algorithm}

\newtheorem{theorem}{Theorem}

\newtheorem{lemma}{Lemma}
\newtheorem{remark}{Remark}
\newtheorem{assumption}{Assumption}

\DeclareFontFamily{OT1}{pzc}{}
\DeclareFontShape{OT1}{pzc}{m}{it}{<-> s * [1.200] pzcmi7t}{}
\DeclareMathAlphabet{\mathpzc}{OT1}{pzc}{m}{it}

\newcommand*\mcapinn[2]{\vcenter{\hbox{$\mathsurround=0pt
  \ifx\displaystyle#1\textstyle\else#1\fi\bigcap$}}}

\newcommand*\mcupinn[2]{\vcenter{\hbox{$\mathsurround=0pt
  \ifx\displaystyle#1\textstyle\else#1\fi\bigcup$}}}

\def\begequarr{\begin{eqnarray}}
\def\endequarr{\end{eqnarray}}
\def\begequarrs{\begin{eqnarray*}}
\def\endequarrs{\end{eqnarray*}}
\def\begequ{\begin{equation}}
\def\endequ{\end{equation}}
\def\begequs{\begin{equation*}}
\def\endequs{\end{equation*}}
\def\begite{\begin{itemize}}
\def\endite{\end{itemize}}

\def\begcen{\begin{center}}
\def\endcen{\end{center}}
\def\begrem{\begin{remark}\rm}
\def\endrem{\end{remark}}
\def\ba{\begin{aligned}}
\def\ea{\end{aligned}}

\def\col{\textnormal{col}\; }

\def\col{\textnormal{col}}

\newcommand{\mV}{\mathbb{V}}

\newcommand{\bb}{\mathbf{b}}

\newcommand{\Gb}{\mathbf{G}}

\newcommand{\xb}{\mathbf{x}}

\newcommand{\yb}{\mathbf{y}}

\newcommand{\Hb}{\mathbf{H}}

\newcommand{\Lb}{\mathbf{L}}
\newcommand{\Lc}{\mathcal{L}}

\newcommand{\Xb}{\mathbf{X}}

\newcommand{\Vb}{\mathbf{V}}

\newcommand{\Xt}{\tilde{\mathbf{X}}}

\newcommand{\xt}{\tilde{\mathbf{x}}}
\newcommand{\yt}{\tilde{\mathbf{y}}}

\newcommand{\zt}{\tilde{\mathbf{z}}}
\newcommand{\ft}{\tilde{f}}

\title{\bf Distributed Nesterov Flows for Multi-agent Optimization (Extended Version)} 

\author{%
    Zihao Ren,
    Lei Wang, and
    Guodong Shi\thanks{Z. Ren and L. Wang are with College of Control Science and Engineering, Zhejiang University, Hangzhou, China (Email: zhren200@zju.edu.cn, lei.wangzju@zju.edu.cn). G. Shi is with the Australian Center for Robotics, School of Aerospace, Mechanical, and Mechatronics Engineering, The University of Sydney, Australia (Email: guodong.shi@sydney.edu.au).}
}

\date{}
\begin{document}

\maketitle

\begin{abstract}
Various distributed gradient descent algorithms for multi-agent optimization have incorporated the Nesterov accelerated gradient  method, where the use of momentum enhances convergence rates. These algorithms have found broad applications in large-scale machine learning and optimization owing to their simplicity and low communication complexity. In this paper, we establish a continuous-time approximation of distributed Nesterov gradient descent. The convergence properties and convergence rate of the resulting distributed Nesterov flow are analyzed using Lyapunov methods. Building on these insights, we design new parameter choices within the flow, from which we derive flow-inspired discrete-time algorithms for multi-agent optimization. Surprisingly, the resulting algorithms achieve faster convergence compared to existing distributed gradient descent methods: they require fewer iterations to reach the same accuracy for strongly convex functions and exhibit an improved convergence rate for general convex functions without incurring additional communication rounds. Furthermore, we investigate the influence of the network topology on algorithm performance and derive an explicit relationship between the convergence rate and the graph condition number. Numerical simulations are presented to validate the effectiveness of the proposed approach.

\end{abstract}

\section{Introduction}

Distributed intelligent systems, such as drone swarms, smart grids, and cyber-physical systems, have been widely studied across control theory, signal processing, and machine learning \cite{magnusbook}-\cite{Rabbat2010}. These systems are typically represented mathematically as networks of interacting agents, where each node corresponds to an agent and edges represent communication links between them.
When tasked with objectives like clustering, collaborative control, or optimization, distributed systems must compute in a decentralized manner. In this process, each agent maintains local information, exchanges messages with neighboring nodes, and jointly solves global mathematical problems \cite{magnusbook}.
Among them, distributed optimization problems are widely studied due to their applicability in modeling many real-world scenarios \cite{Patari2022}-\cite{Shen2022}. In such problems, each node has a local function, and the collective goal is to find a solution that minimizes the sum of all local functions through iterative communication across the network.

The common distributed optimization algorithms are established on consensus algorithms, which ensure consistency across the models of agents. A natural combination of gradient descent algorithms and consensus algorithms results in distributed gradient descent algorithms (DGD) \cite{DSMF}. Further algorithms, incorporating additional states, have been proposed to achieve faster convergence rates, e.g. {linear} convergence, such as distributed primal-dual algorithms (DPD) \cite{OAFS,EXTRA,XY-LCOF} and distributed gradient tracking algorithms (DGT) \cite{MB-GT}.

Distributed fast gradient descent algorithms aim to accelerate convergence by extending Nesterov's seminal work \cite{NEST} to a networked setting \cite{FDGM, DNGA, AADS}-\cite{AGAA}. The Nesterov accelerated gradient method is a centralized optimization technique that enhances convergence by combining momentum with gradient descent. Below, we provide a brief comparison of the convergence performance of these algorithms for strongly convex and convex optimization.

\begin{itemize}
\item [(i)]
For strongly convex optimization, DGD requires $\mathcal{O}\left(\epsilon^{-1}\right)$ iterations to reach $\epsilon$-accuracy \cite{DSMF}, while DPD and DGT exhibit linear convergence rates and requires  $\mathcal{O}\left(\frac{L}{\mu} \ln\left(\epsilon^{-1}\right)\right)$ iterations for $\mu$-strongly convex and $L$-Lipschitz continuous functions \cite{EXTRA}. 
As for Nesterov-accelerated methods, the DGT-based algorithm Acc-DNGD-SC, proposed in \cite{ADNG, AGAA}, attains a faster unbiased linear convergence rate and requires $\mathcal{O}\left(\left(\frac{L}{\mu}\right)^{\frac{5}{7}} \ln\left(\epsilon^{-1}\right)\right)$ iterations to achieve $\epsilon$-accuracy.
\item [(ii)] For convex optimization, DGD achieves a convergence rate of $\mathcal{O}\left(\frac{1}{k^{2/3}}\right)$ \cite{DSMF}, while DPD achieve a convergence rate of $\mathcal{O}\left(\frac{1}{k}\right)$ \cite{OAFS}. When the Nesterov method is incorporated, the DGD-based algorithm in \cite{DNGA}, referred to as Acc-B-DGD (Accelerated Biased DGD), uses a constant step size and achieves a biased convergence rate of $\mathcal{O}\left(\frac{1}{k^2}\right)$. In \cite{FDGM}, the D-NG algorithm with decaying step sizes achieves a convergence rate of $\mathcal{O}\left(\frac{\log k}{k}\right)$. With additional communication, the D-NC algorithm proposed in \cite{FDGM} achieves a convergence rate of $\mathcal{O}\left(\frac{1}{k^2}\right)$. In \cite{Xu2020,DPDN}, the Nesterov method is integrated into DPD,  and the resulting DPDN achieves a convergence rate of $\mathcal{O}\left(\frac{1}{k^{2}}\right)$.
\end{itemize}

Notably, although DGD-based algorithms generally exhibit slower convergence rates compared to those based on DPD and DGT, they remain widely studied and have significant application potential due to their simple algorithmic structure and lower communication state requirements \cite{chen2025distributed,FDL}. {
This motivates us to investigate whether DGD can be accelerated without
introducing additional communicated states or communication rounds.
The central objective of this work is therefore to improve its
convergence while preserving its low per-iteration communication cost.
}

\color{black}

In this paper, we introduce distributed Nesterov gradient descent algorithms by exploring their continuous-time approximations. Our primary focus is twofold: first, to apply insights from control theory to analyze the distributed Nesterov gradient descent algorithm; and second, to leverage conditions derived from the continuous-time flow to inform the design of new discrete-time algorithms that guarantee faster convergence rates for both strongly convex and convex optimization problems. Additionally, we investigate the impact of graph topology on the proposed algorithms. The main contributions of this work are summarized as follows:

\begin{itemize}
\item [(i)] We derive a general continuous-time formulation for distributed Nesterov gradient descent algorithms. This formulation not only provides a deeper understanding of the relationships between key parameters, such as the consensus coefficient, friction coefficient, and gradient coefficient, but also facilitates the extension of Nesterov acceleration techniques to distributed settings.
\item [(ii)] By exploring various parameter settings within this continuous-time framework, we propose two new algorithms. Compared to other DGD-based algorithms, the proposed DNGD-SC requires fewer iterations, specifically $\mathcal{O}\left(\frac{\ln \epsilon^{-1}}{\sqrt{\epsilon}}\right)$, to achieve $\epsilon$-accuracy for strongly convex optimization, and the proposed DNGD-C achieves a faster convergence rate of $\mathcal{O}\left(\frac{1}{k}\right)$ for convex optimization. {
Notably, both algorithms preserve the per-iteration communication
structure of standard DGD, requiring each agent to transmit only one
\(d\)-dimensional state without any additional communication rounds.
} The above convergence rates are compared with other DGD-based algorithms in Table \ref{tab:1}.
\end{itemize}

\begin{table}[h]
    \centering
    \caption{Comparison between our work and corresponding results in literature. $\epsilon$: the desired accuracy; $k$: the iteration in discrete-time algorithms.}
    \label{tab:1}
    \begin{subtable}[t]{0.8\textwidth}
    \centering
    
    \caption{Strongly Convex Optimization}
    \begin{tabular}{c|c|c}
        \hline
        \textbf{Algorithm} & \textbf{Iterations for $\epsilon$} & \textbf{Extra Cost} \\
        \hline
        DGD \cite{DSMF}/D-NG \cite{FDGM} & $\mathcal{O}(\epsilon^{-1})$ & --  \\
        D-NC \cite{FDGM} & $\mathcal{O}(\epsilon^{-2})$ & Multiple Communication  \\
        DNGD-SC & $\mathcal{O}\left(\frac{\ln\epsilon^{-1}}{\sqrt{\epsilon}}\right)$ & -- \\
        \hline
    \end{tabular}
\end{subtable}
\medskip
 \centering
\begin{subtable}[h]{0.8\textwidth}
\centering
\caption{Convex Optimization}
\begin{tabular}{c|c|c}
\hline
\textbf{Algorithm} & \textbf{Convergence Rate} & \textbf{Extra Cost} \\
\hline
DGD \cite{DSMF} & $\mathcal{O}(k^{-2/3})$ & --- \\
D-NG \cite{FDGM} & $\mathcal{O}\left(\dfrac{\log k}{k}\right)$ & --- \\
D-NC \cite{FDGM} & $\mathcal{O}(k^{-2})$ & Multiple Communication \\
Acc-B-DGD \cite{DNGA} & $\mathcal{O}(k^{-2})$ & Biased Convergence \\
DNGD-C & $\mathcal{O}(k^{-1})$ & --- \\
\hline
\end{tabular}
\end{subtable}

\end{table}

 Our work builds on \cite{ADEF}--\cite{large} by extending their analysis to distributed optimization. Prior studies focused on continuous-time formulations of the centralized Nesterov method to guide discrete-time parameter design. In particular, \cite{ADEF} proposed a second-order flow for Nesterov acceleration, identifying the friction coefficient $\frac{\alpha}{t}$ as key to convergence, with $\alpha = 3$ being optimal. Subsequent works \cite{large} and \cite{ROCO} further characterized the regimes $\alpha > 3$ and $0 < \alpha \leq 3$, respectively. In addition, there has also been a wealth of research on distributed continuous flows in various domains  \cite{DPDN}--\cite{LW_SUR,Ren2026}.


\color{black}


\emph{\bf Notation.}
In this paper, $\|\cdot \|$ denotes the Euclidean norm. The notation $\mathbf{1}_n$ ($\mathbf{0}_n$) represents the column vector of ones (zeros), and $\mathbf{I}_n$ denotes the identity matrix. The symbol $\otimes$ represents the Kronecker product. For a differentiable function, $\nabla(\cdot)$ denotes its gradient. For a matrix $\mathbf{A}$, its column space is denoted by $\col(\mathbf{A})$.

\section{Problem Definition}
\label{sec.pro}

\subsection{Distributed Optimization}

 Consider a network of agents indexed by $\mathbb{V}=\left\{1,2,\dots,n\right\}$, where each agent $i\in\mathbb{V}$ holds an objective function $f_i:\mathbb{R}^d\rightarrow \mathbb{R}$. The agents aim to solve the following system-level optimization problem 
\begin{equation}
    \label{eq:DO}
    \ba
    \mathrm{min}&\  \sum_{i\in\mV} f_i\left(\xb_i\right),\\
    \mathrm{s.t.}&\ \xb_i=\xb_j, \quad\forall i,j\in \mV.
    \ea
\end{equation}
Each local gradient  $\nabla f_i$ is assumed to be globally $L$-Lipschitz continuous and it is assumed that $f_i>-\infty$.

As each agent only has information about its own objective function, to solve such a distributed optimization problem \eqref{eq:DO}, a communication network is usually required to transmit messages.
The network communication can be modeled as a graph $\mathrm{G}(\mathbb{V}, \mathbb{E})$, where $\mathbb{E}$ represents the set of edges, and each agent communicates only with its neighbors. Let $[a_{ij}] \in \mathbb{R}^{n \times n}$ denote the weight matrix associated with the graph $\mathrm{G}$, where $a_{ij} > 0$ if $\{j, i\} \in \mathbb{E}$ and $a_{ij} = 0$ if $\{j, i\} \notin \mathbb{E}$. Let $\Lb$ denote the Laplacian matrix of the graph $\mathrm{G}$, where $[\Lb]_{ij} = -a_{ij}$ for all $i \neq j$, and $[\Lb]_{ii} = \sum_{j=1}^n a_{ij}$ for all $i \in \mathbb{V}$. In addition, denote the neighbor set of agent $i$ as $\mathbb{N}_i$, satisfying $j\in\mathbb{N}_i$ if and only if $[\Lb]_{ij}\neq0$ for all $i,j\in\mV$. For simplicity, we assume that the graph $\mathrm{G}$ is undirected, connected, and time-invariant. This implies that the Laplacian matrix is symmetric and positive semi-definite, with eigenvalues $\lambda_i$ for $i \in \mathbb{V}$, ordered in ascending order as $0 = \lambda_1 < \lambda_2 \leq \dots \leq \lambda_n$. We define 
the \emph{condition number} of $\Lb$ as $\kappa:=\frac{\lambda_n}{\lambda_2}\geq1$. The larger $\kappa$ is, the worse the connectivity of the graph be. 
Furthermore, we have $\Lb \mathbf{1}_n = \mathbf{0}_n$. Please see \cite{magnusbook} for more details and explanations.

\subsection{Distributed Nesterov Gradient Descent Flow}

Algorithms for solving Problem \eqref{eq:DO} have been extensively studied in the literature. A general form of the distributed Nesterov gradient descent algorithm can be {derived} by
\begin{equation}
\label{eq:DGDO-N-dis}
\ba
{\xb_{i,k+1}} &= \yb_{i,k} - \eta \sum_{j \in \mathbb{N}_i} \Lb_{ij} \xb_j + \hat\beta(k, \eta) \nabla f_i({\yb_{i,k}}), \\
\yb_{i,k} &= \xb_{i,k} + \hat{\alpha}(k, \eta) (\xb_{i,k} - \xb_{i,k-1}), \quad i \in \mathbb{V},
\ea
\end{equation}
where \( \eta \), \( \hat{\alpha} \), and \( \hat{\beta} > 0 \) are the consensus coefficient, the friction coefficient, and the gradient coefficient, respectively, and the initial condition is given by \( \xb_{i,0} = \yb_{i,0} \). Notably, the form of \eqref{eq:DGDO-N-dis} is commonly adopted in specific distributed Nesterov gradient descent algorithms in the literature \cite{FDGM, DNGA}.

Next, we derive the continuous-time flow from \eqref{eq:DGDO-N-dis}. The relationship between the parameters in both the discrete-time and continuous-time formulations will be established, providing a foundation for the subsequent design process.

Inspired by the approach in \cite{ADEF}, we introduce the ansatz \( \Xb(t) = \Xb(k \sqrt{\eta}) \approx \xb_k \), where \( \sqrt{\eta} > 0 \) represents the time interval. With this assumption, we have \( \xb_{k+1} \approx \Xb(t + \sqrt{\eta}) \). As \( \eta \to 0 \), the following approximation holds from Taylor expansion,
\begin{equation}
\label{Talor}
\ba
\frac{\xb_{i,k+1} - \xb_{i,k}}{\sqrt{\eta}} &= \dot{\Xb}_i(t) + \sqrt{\eta} \frac{1}{2} \ddot{\Xb}_i(t) + o(\sqrt{\eta}), \\
\frac{\xb_{i,k} - \xb_{i,k-1}}{\sqrt{\eta}} &= \dot{\Xb}_i(t) - \sqrt{\eta} \frac{1}{2} \ddot{\Xb}_i(t) + o(\sqrt{\eta}), \\
\eta \nabla f(\yb_{i,k}) &= \eta \nabla f_i(\Xb(t)) + o(\sqrt{\eta}).
\ea
\end{equation}
Substituting \eqref{Talor} into \eqref{eq:DGDO-N-dis}, we obtain the following continuous-time equation,
\[
\ba
&\sqrt{\eta} \dot{\Xb}_i + \frac{1}{2} \eta \ddot{\Xb}_i \\
= &\hat{\alpha}\left(\frac{t}{\sqrt{\eta}}, \eta\right)\left( \sqrt{\eta} \dot{\Xb}_i - \eta \frac{1}{2} \ddot{\Xb}_i \right) + \hat{\beta}\left(\frac{t}{\sqrt{\eta}}, \eta\right) \nabla f_{{i}}(\Xb_i) \\
&- \eta \sum_{j \in \mathbb{N}_i} \Lb_{ij} \xb_j + o\left(\sqrt{\eta}\right).
\ea
\]
Comparing the coefficients with respect to \( \eta \), if

\begin{equation}
     \label{eq:alphabeta}
     \ba
\alpha(t) = \lim_{\eta \to 0} \left( \frac{1 - \hat{\alpha}\left( \frac{t}{\sqrt{\eta}}, \eta \right)}{\sqrt{\eta}} \right), \ \ \beta(t) = \lim_{\eta \to 0} \left( \frac{\hat{\beta}\left( \frac{t}{\sqrt{\eta}}, \eta \right)}{\eta} \right)
\ea
\end{equation}
exist, we can obtain the following distributed Nesterov gradient descent flow,

\begin{equation}
\label{eq:DGDO-N}
\ba 
         \ddot{\Xb}_i + \alpha(t) \dot{\Xb}_i + \sum_{j \in \mathbb{N}_i} \Lb_{ij} \xb_j + \beta(t) \nabla f_i(\Xb_i) = 0, \quad i \in \mathbb{V},
\ea
\end{equation}
where the initial condition is \( \dot{\Xb}_i(0) = 0 \).

We can also construct new discrete-time distributed Nesterov gradient descent algorithms from a given \( \alpha(t) \) and \( \beta(t) \) in the flow \eqref{eq:DGDO-N}, by letting

\begin{equation}
     \label{eq:alphabeta2}
     \ba
\hat{\alpha}(k, \eta) &= 1 - \alpha(k \sqrt{\eta}) \sqrt{\eta} + o(\sqrt{\eta}), \\
\hat{\beta}(k, \eta) &= \beta(k \sqrt{\eta}) \eta + o(\eta).
\ea
\end{equation}

\subsection{Problem of Interest}

From the above analysis, the following problem will be the focus of this paper.
How can we leverage insights from control theory to interpret the distributed Nesterov gradient descent algorithm, and how do the conditions derived from the corresponding continuous-time flow \eqref{eq:DGDO-N} inspire the design of new discrete-time algorithms with provably faster convergence rates in both strongly convex and convex optimization problems?

\color{black}





\section{{Strong Convexity with Fixed Coefficients}}
\label{sec.fix}

In this section, we  design the parameters in the flow \eqref{eq:DGDO-N} for strongly convex optimization problems, establish the relationship between convergence rate and parameters, and then propose a discrete-time algorithm to achieve faster convergence.

First, we introduce the following assumption on the strongly convex objective functions.  

\medskip

\begin{assumption}
    \label{ass-strcon}
    The objective function $\sum_{i\in\mV}f_i$ is $\mu$-strongly convex. Specifically, for some $\mu>0$, $f(\Xb)$ satisfies
    \[
    f(\Xb')\geq f(\Xb)+\nabla f(\Xb)^\top(\Xb'-\Xb)+\frac{\mu}{2}\|\Xb'-\Xb\|^2
    \]
    for all $ \Xb,\Xb'\in\mathbb{R}^d$, where $f: = \sum_{i\in\mV} f_i$. \hfill $\square$
\end{assumption}
If Assumption \ref{ass-strcon} holds, there exists a unique optimal solution $\Xb^\ast\in\mathbb{R}^d$ such that 
$
\nabla f(\Xb^\ast)=0,\quad f(\Xb^\ast)=f^\ast,
$
where $f^\ast$ denotes its optimal value.  

{
Under this assumption, we will provide convergence analysis of flow \eqref{eq:DGDO-N}.}
For strongly convex optimization we consider setting the friction coefficient $\alpha(t)$ to a constant value following \cite{ADEF,NEST}. To further accelerate convergence, we select the gradient coefficient $\beta(t)$ in the flow \eqref{eq:DGDO-N} to be a constant $\beta$, instead of letting it decay to zero. With this choice, since $\nabla f_i(\Xb^\ast)\neq 0$, the resulting algorithms cannot ensure strict unbiasedness.

In this setting, we know $\Xb_i(t)$ generated by flow \eqref{eq:DGDO-N} satisfies biased linear convergence, as stated in the following lemma.
\begin{lemma}
\label{lem-sc-2}
Let Assumption \ref{ass-strcon} hold. Assume it holds $\Xb_i(0)=\Xb_j(0)$ for all $i,j\in\mathbb{V}$. Then, for $\Xb_i(t)$ generated by flow \eqref{eq:DGDO-N} with fixed parameters 
\begin{equation}
    \label{eq:alpha}
    \ba
&\alpha(t) =\alpha=2\sqrt{\min\left\{\lambda_2,\frac{\beta\mu}{n}\right\}},\\
&\beta(t)=\beta\in \left(0, \min\left\{\frac{\lambda_2}{4L^2},\frac{\mu\lambda_2}{16n^2L^2}\right\}\right),
\ea
\end{equation} it holds that
\[
    \ba
f(\Xb_i(t)) - f^\ast\leq Be^{-\sqrt{\min\left\{\lambda_2,\frac{\beta\mu}{n}\right\}}t}+\beta {D}
\ea
\]
for some constants $B,D>0$ and  for all $i\in\mathbb{V}$. 
\end{lemma}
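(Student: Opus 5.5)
Stack the agents as $\Xb=\col(\Xb_1,\dots,\Xb_n)\in\mathbb{R}^{nd}$ and write the flow \eqref{eq:DGDO-N} compactly as
\[
\ddot{\Xb}+\alpha\dot{\Xb}+(\Lb\otimes\mathbf{I}_d)\Xb+\beta\nabla F(\Xb)=0,\qquad F(\Xb):=\sum_{i\in\mV}f_i(\Xb_i).
\]
This is a heavy-ball system with constant friction on the penalized function
\[
G(\Xb):=\tfrac12\Xb^\top(\Lb\otimes\mathbf{I}_d)\Xb+\beta F(\Xb),
\]
since $\nabla G=(\Lb\otimes\mathbf{I}_d)\Xb+\beta\nabla F$. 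The plan is to prove three things in turn: (a) $G$ is strongly convex on $\mathbb{R}^{nd}$ with a modulus comparable to $m:=\min\{\lambda_2,\beta\mu/n\}$ (up to constants); (b) the flow converges linearly to the unique minimizer $\Xb^\beta$ of $G$ at rate $e^{-\sqrt m\,t}$; (c) each block $\Xb_i^\beta$ lies within $O(\beta)$ of $\Xb^\ast$ in objective value, which produces the bias term $\beta D$.

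\textbf{Step (a).} Decompose $\Xb=\mathbf{1}_n\otimes\bar{\Xb}+\tilde{\Xb}$, where $\tilde{\Xb}\perp\col(\mathbf{1}_n\otimes\mathbf{I}_d)$.
\begin{itemize}
\item On the consensus direction, $F(\mathbf{1}_n\otimes\bar{\Xb})=f(\bar{\Xb})$. Since $\|\mathbf{1}_n\otimes v\|^2=n\|v\|^2$, this direction has curvature at least $\beta\mu/n$ per unit of stacked norm.
\item On the disagreement direction, the Laplacian contributes at least $\lambda_2\|\tilde{\Xb}\|^2$.
\item The cross terms come from $\beta\nabla F$ and are controlled by $\beta L$ through Lipschitz continuity.
\end{itemize}
The conditions $\beta<\lambda_2/(4L^2)$ and $\beta<\mu\lambda_2/(16n^2L^2)$ are exactly what Young's inequality needs to absorb these cross terms. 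With them, a quadratic-form argument should give
\[
\big(\nabla G(\Xb)-\nabla G(\Yb)\big)^\top(\Xb-\Yb)\ \ge\ c\,m\,\|\Xb-\Yb\|^2 .
\]
This step needs care, because each individual $f_i$ need not be convex.

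\textbf{Step (b).} Use the standard Lyapunov function for heavy-ball with friction $\alpha=2\sqrt m$:
\[
\mathcal{E}(t)=G(\Xb)-G(\Xb^\beta)+\tfrac12\big\|\sqrt m\,(\Xb-\Xb^\beta)+\dot{\Xb}\big\|^2 .
\]
Differentiate along the flow and substitute $\ddot{\Xb}$. Then apply the strong-convexity inequality from (a), together with $G(\Xb^\beta)\ge G(\Xb)+\nabla G(\Xb)^\top(\Xb^\beta-\Xb)+\tfrac{cm}{2}\|\Xb-\Xb^\beta\|^2$. This should yield $\dot{\mathcal{E}}\le-\sqrt m\,\mathcal{E}$, hence
\[
\|\Xb(t)-\Xb^\beta\|^2\le C e^{-\sqrt m\,t}.
\]
The initial data ($\dot{\Xb}(0)=0$, consensual $\Xb(0)$) only enter the constant $\mathcal{E}(0)$, and this gives the constant $B$. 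I expect this step to be the main obstacle: the constant $c$ from (a) may be smaller than $1$, so the choice $\alpha=2\sqrt m$ must be reconciled with the true modulus. If the exact rate $\sqrt m$ fails, I would first try rescaling $m$ inside the Lyapunov function, or using a weighted cross term $\tfrac12\|\lambda(\Xb-\Xb^\beta)+\dot{\Xb}\|^2$ with tuned weights.

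\textbf{Step (c).} The optimality condition $\nabla G(\Xb^\beta)=0$ gives $(\Lb\otimes\mathbf{I}_d)\Xb^\beta=-\beta\nabla F(\Xb^\beta)$.
\begin{itemize}
\item Bounding $\nabla F$ near $\Xb^\ast$ then gives $\|\tilde{\Xb}^\beta\|\le\beta\|\nabla F(\Xb^\beta)\|/\lambda_2=O(\beta)$.
\item Multiplying by $\mathbf{1}_n^\top\otimes\mathbf{I}_d$ gives $\sum_i\nabla f_i(\Xb^\beta_i)=0$. Hence $\|\nabla f(\bar{\Xb}^\beta)\|\le\sqrt n\,L\|\tilde{\Xb}^\beta\|$, and strong convexity then gives $\|\bar{\Xb}^\beta-\Xb^\ast\|=O(\beta)$.
\end{itemize}
Finally, $f$ is $nL$-smooth, so
\[
f(\Xb_i(t))-f^\ast\le nL\big(\|\Xb_i(t)-\Xb_i^\beta\|^2+\|\Xb_i^\beta-\Xb^\ast\|^2\big).
\]
The first term decays like $e^{-\sqrt m\,t}$ and the second is $O(\beta^2)\le\beta D$, which gives the claimed bound for all $i\in\mV$.
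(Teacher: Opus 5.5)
Your architecture is the paper's: you view the flow as constant-friction heavy-ball dynamics on $G=\Theta_\beta=\frac12\Xt^\top\Lc\Xt+\beta\ft$, claim strong convexity of $G$ with modulus of order $m=\min\{\lambda_2,\beta\mu/n\}$, get linear convergence, and then pass to $f(\Xb_i(t))-f^\ast$.

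The genuine gap is Step (a). Both (b) and the existence and uniqueness of $\Xb^\beta$ rest on it. The cross term between the consensus and disagreement parts of the increment is not the problem, since Young's inequality handles it. The problem is the curvature of $F$ along a consensus direction $\mathbf{1}_n\otimes a$ at a non-consensual point $\Xb$. That curvature is $a^\top\bigl(\sum_i\nabla^2 f_i(\Xb_i)\bigr)a$, with the $f_i$ evaluated at different points. Strong convexity of the sum $f$ gives no lower bound on it, whatever $\beta$ is.

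Concretely, take $n=2$, $d=1$, a single edge of unit weight, $f_1(x)=\mu\cos x$ and $f_2(x)=\frac{\mu}{2}x^2-\mu\cos x$. Then $f=\frac{\mu}{2}x^2$ satisfies Assumption~\ref{ass-strcon}, each $\nabla f_i$ is $2\mu$-Lipschitz, and each $f_i$ is bounded below. Yet
\[
\nabla^2 G(0,\pi)=\begin{pmatrix}1-\beta\mu & -1\\ -1 & 1\end{pmatrix},\qquad \det\nabla^2 G(0,\pi)=-\beta\mu<0
\]
for every $\beta>0$, so $G$ is not even convex. Convexity of every $f_i$ does not rescue it either. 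If $f_1''=\mu\sigma$ and $f_2''=\mu(1-\sigma)$ for a smooth step $\sigma$ vanishing on $(-\infty,-1]$ and equal to $1$ on $[1,\infty)$, then $\nabla^2G(-2,2)=\Lb$ is singular.

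The paper's proof asserts the same $\mu_\theta$-strong convexity. It tests only purely consensual and purely disagreeing perturbations, and in the consensual case it applies strong convexity of $f$ at a non-consensual base point, which is exactly the invalid step. So the paper does not supply the missing idea either.

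To close (a) you need an extra hypothesis. For example, assume each $f_i$ is $\mu_i$-strongly convex with $\sum_i\mu_i\geq\mu$. Writing $\Xb-\mathbf{Y}=\mathbf{1}_n\otimes a+\mathbf{w}$, this gives $(\nabla F(\Xb)-\nabla F(\mathbf{Y}))^\top(\Xb-\mathbf{Y})\geq\sum_i\mu_i\|a+\mathbf{w}_i\|^2$. Your Young's argument then gives modulus $cm$ with $c=1-\frac{1}{16n}$. This uses $\beta L<\lambda_2/(16n)$ and $m=\beta\mu/n<\lambda_2/(16n)$, both of which follow from the bound on $\beta$ together with $\mu\leq nL$.

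Your worry about (b) is also justified. With modulus $cm$ and $c<1$, your $\mathcal{E}$ only satisfies $\dot{\mathcal{E}}\leq-\sqrt{m}\,\mathcal{E}+\frac{1-c}{2}m^{3/2}\|\Xb-\Xb^\beta\|^2$, so it does not close. Appealing to the paper's exact modulus $\mu_\theta$ does not help. That claim fails even for strongly convex quadratics with unequal curvatures, where the smallest eigenvalue of $\Lc+\beta\nabla^2F$ lies below $\beta\mu/n$ by order $\beta^2$. You need a Lyapunov function that uses the slack between two rates: the rate $\sqrt{m}$ your Step (c) requires for $\|\Xb-\Xb^\beta\|^2$, and the rate the dynamics actually deliver, which for quadratics is close to $2\sqrt{m}$ when $c$ is close to $1$.

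Step (c), by contrast, is a genuinely different and better route than the paper's. The paper never bounds $\Xt^\ast_\beta-\mathbf{1}_n\otimes\Xb^\ast$. Instead it transfers the decay of $\Theta_\beta(\Xt(t))-\Theta^\ast_\beta$ to $f(\Xb_i(t))-f^\ast$ using three ingredients: $\Theta^\ast_\beta\leq\beta f^\ast$; a gradient bound that relies on a minimizer $\Xt^\ast_{\rm nc}$ of $\ft$ (not guaranteed to exist) and on $\sup_\beta\|\Xt^\ast_{\rm nc}-\Xt^\ast_\beta\|$; and Young's inequality. That last step is where the paper actually uses the two conditions on $\beta$, to cancel the disagreement term and to merge the exponential coefficients. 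Your stationarity argument closes with a small-gain step: $\|\nabla F(\Xb^\beta)\|\leq\|\nabla F(\mathbf{1}_n\otimes\Xb^\ast)\|+L\|\Xb^\beta-\mathbf{1}_n\otimes\Xb^\ast\|$ together with $(nL/\mu+1)\beta L/\lambda_2\leq 1/8$. It needs no minimizer of $\ft$, applies to every stationary point of $G$, and gives a bias of order $\beta^2$ rather than $\beta$.
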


With the coefficient values in \eqref{eq:alpha} established, we are now ready to propose the discrete-time algorithm and analyze its corresponding coefficients.
 We recall the algorithm proposed in \cite{NEST} as follows. 

\begin{equation}
    \label{eq:cla_nes2}
    \ba
    \xb_{k+1} &= \yb_{k} - \eta \nabla f(\yb_{k}), \\
    \yb_{k} &= \xb_{k} + \frac{1 - \sqrt{\mu \eta}}{1 + \sqrt{\mu \eta}} (\xb_{k} - \xb_{k-1}),
    \ea
\end{equation}
where $\eta > 0$ and the initial condition is $\xb_0 = \yb_0$.

In \eqref{eq:cla_nes2}, if $f$ is $\mu$-strongly convex, $\nabla f$ is $L$-Lipschitz continuous with optimal value $f^\ast$, and $\eta = \frac{1}{L}$, the value $f(\xb_k) - f^\ast$ will converge to zero linearly \cite{NEST}. Compared with \eqref{eq:DGDO-N-dis}, we have $\hat\alpha(k, \eta) = \frac{1 - \sqrt{\mu \eta}}{1 + \sqrt{\mu \eta}}$ and $\hat{\beta}(k, \eta) = \eta$ in \eqref{eq:cla_nes2}. Substituting these into \eqref{eq:alphabeta}, we get $\alpha(t) = 2 \sqrt{\mu}$ and $\beta = 1$. We then obtain the following flow
\begin{equation}
    \label{eq:scODE}
    \ba
         \ddot{\Xb} + 2 \sqrt{\mu} \dot{\Xb} + \nabla f(\Xb) &= 0,
    \ea
\end{equation}
where the initial condition is $\dot{\Xb}(0) = 0$. 

\color{black}
Comparing \eqref{eq:cla_nes2} and flow \eqref{eq:scODE}, we have
\[
    \hat\alpha = \frac{2\alpha\sqrt{\eta} - \alpha^2\eta}{2\alpha\sqrt{\eta} + \alpha^2\eta}, \quad \hat\beta = \beta\eta.
\]
Substituting the above results and \eqref{eq:alpha} into \eqref{eq:DGDO-N-dis}, we obtain the following Distributed Nesterov Gradient Descent algorithm for Strongly Convex optimization (DNGD-SC),
\begin{equation}
    \label{eq:DNGD-SC}
    \ba
    \xb_{i,k+1} &= \yb_{i,k} -  \eta \sum_{j \in \mathbb{N}_i} \Lb_{ij} \yb_{j,k} - \eta \beta \nabla f_i(\yb_{i,k}), \\
    \yb_{i,k} &= \xb_{i,k} + \frac{2\alpha\sqrt{\eta} - \alpha^2 \eta}{2\alpha\sqrt{\eta} + \alpha^2 \eta} (\xb_{i,k} - \xb_{i,k-1}), \quad i \in \mV,
    \ea
\end{equation}
where \(\eta, \alpha, \beta > 0\) and the initial condition is \(\xb_{i,0} = \yb_{i,0}\). We have the following result for DNGD-SC \eqref{eq:DNGD-SC}.

{
\begin{theorem}\label{thm-sc-dis}
Let Assumption~\ref{ass-strcon} hold, and suppose that
\(\xb_{i,0}=\xb_{j,0}\) for all \(i,j\in\mathbb{V}\).
Consider \(\xb_{i,k}\) generated by DNGD-SC~\eqref{eq:DNGD-SC}.
For a prescribed accuracy \(\epsilon>0\), choose \(
\beta=c_\beta\epsilon,
\)
where \(c_\beta>0\) is independent of \(\epsilon\) and satisfies
\[
c_\beta\leq\frac{\lambda_2}{2D},
\qquad
c_\beta\epsilon\leq
\min\left\{
\frac{\lambda_2}{4L^2},
\frac{\mu\lambda_2}{16n^2L^2}
\right\},
\]
and let
\(
\eta=\frac{1}{\lambda_n+\beta\sqrt{n}L}\), \(
\alpha=2\sqrt{\min\left\{\lambda_2,\frac{\beta\mu}{n}\right\}}.
\)
Then \(f(\xb_{i,k})-f^\ast\leq\epsilon\) for all
\(k>k_0(\epsilon,\beta)\), where
\[
k_0(\epsilon)
=
\frac{\ln\left(\frac{\epsilon}{2B}\right)}
{\ln\left(
1-
\sqrt{
\frac{
\min\left\{\lambda_2,\frac{c_\beta\epsilon\mu}{n}\right\}
}{
\lambda_n+c_\beta\epsilon\sqrt{n}L
}
}
\right)}=\mathcal{O}\left(
\sqrt{\kappa}\,
\frac{\ln\epsilon^{-1}}{\sqrt{\epsilon}}
\right).
\]
Here, \(B>0\) and \(D>0\) are constants determined
by the initial value and can be seen in the proof.
\hfill$\square$
\end{theorem}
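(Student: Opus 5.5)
The plan is to mirror the continuous-time argument behind Lemma~\ref{lem-sc-2} with a discrete Lyapunov function for DNGD-SC~\eqref{eq:DNGD-SC}. We show a per-iteration contraction of the form
\[
V_{k+1}\le \Bigl(1-\sqrt{\tfrac{\min\{\lambda_2,\beta\mu/n\}}{\lambda_n+\beta\sqrt{n}L}}\Bigr)V_k+\rho\,\beta D,
\]
and then unroll it to obtain the stated $k_0(\epsilon)$.

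The first step is to write the stacked iteration. Set $\xb_k=\col(\xb_{1,k},\dots,\xb_{n,k})$ and
\[
F(\xb)=\sum_i f_i(\xb_i)+\frac{1}{2\beta}\,\xb^\top(\Lb\otimes\mathbf{I}_d)\xb .
\]
Then \eqref{eq:DNGD-SC} is exactly Nesterov's scheme \eqref{eq:cla_nes2} applied to $\beta F$ with step $\eta$, since $\xb_{k+1}=\yb_k-\eta\beta\nabla F(\yb_k)$. The smoothness constant of $\beta F$ is at most $\lambda_n+\beta\sqrt{n}L$, so the choice $\eta=1/(\lambda_n+\beta\sqrt{n}L)$ is the classical $1/L$ step. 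The momentum coefficient $\frac{2\alpha\sqrt\eta-\alpha^2\eta}{2\alpha\sqrt\eta+\alpha^2\eta}=\frac{1-\sqrt{\tilde\mu\eta}}{1+\sqrt{\tilde\mu\eta}}$ with $\sqrt{\tilde\mu}=\alpha/2=\sqrt{\min\{\lambda_2,\beta\mu/n\}}$.

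The second step shows that $\beta F$ is $\tilde\mu$-strongly convex with $\tilde\mu=\min\{\lambda_2,\beta\mu/n\}$ (up to constants). Decompose $\xb$ into its average $\mathbf{1}_n\otimes\bar\xb$ and a disagreement part. On the disagreement subspace the Laplacian contributes curvature $\lambda_2$. On the consensus direction, Assumption~\ref{ass-strcon} gives curvature $\beta\mu/n$ per normalized coordinate. The cross terms, coming from the fact that individual $f_i$ need not be convex, are controlled by $L$, and absorbing them is exactly what the bound $\beta\le\min\{\lambda_2/(4L^2),\mu\lambda_2/(16n^2L^2)\}$ is for. This is the same Young-inequality splitting as in Lemma~\ref{lem-sc-2}, and I would reuse it directly.

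The standard linear-rate estimate of \cite{NEST} for \eqref{eq:cla_nes2} then gives
\[
F(\xb_k)-F^\ast\le (1-\sqrt{\tilde\mu\eta})^k\,\tilde B ,
\]
where $F^\ast$ is the minimum of $F$ and $\tilde B$ depends on the initial data. Consensus of $\xb_{i,0}$ together with $\xb_{i,-1}=\xb_{i,0}$ makes $\tilde B$, and hence $B$, independent of $\epsilon$ up to constants.

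The third step translates this into the bound on $f(\xb_{i,k})-f^\ast$. The minimizer $\xb^\ast_\beta$ of $F$ is biased: its distance to $\mathbf{1}_n\otimes\Xb^\ast$ is $\mathcal O(\beta)$. This follows from the optimality condition $\Lb\xb^\ast_\beta=-\beta\nabla f(\xb^\ast_\beta)$, which gives a disagreement of order $\beta\|\nabla f\|/\lambda_2$. From this, $L$-smoothness and strong convexity yield
\[
f(\xb_{i,k})-f^\ast\le B(1-\sqrt{\tilde\mu\eta})^k+\beta D .
\]
With $\beta=c_\beta\epsilon$ the choice of $c_\beta$ (the condition $c_\beta\le\lambda_2/(2D)$, up to normalization) makes $\beta D\le\epsilon/2$. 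Requiring $B(1-\sqrt{\tilde\mu\eta})^k\le\epsilon/2$ gives exactly $k>k_0$.

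For the asymptotics, use $\ln(1-x)\approx -x$ with $x=\sqrt{c_\beta\epsilon\mu/(n\lambda_n)}$ for small $\epsilon$, since then the minimum is the $\beta\mu/n$ term. This gives $k_0=\mathcal O\bigl(\ln\epsilon^{-1}\sqrt{\lambda_n}/\sqrt{\epsilon}\bigr)$. The factor $\sqrt{\kappa}$ appears after normalizing $c_\beta\propto\lambda_2$.

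The main obstacle is the second step. Because the $f_i$ need not be individually convex, strong convexity of $F$ holds only up to the coupling between the consensus and disagreement components, and the constants must come out exactly as $\min\{\lambda_2,\beta\mu/n\}$. If global strong convexity fails, the fallback is a direct discrete Lyapunov function: the Nesterov potential $F(\xb_k)-F^\ast+\frac{\tilde\mu}{2}\|\mathbf{v}_k-\xb^\ast_\beta\|^2$ with the standard auxiliary sequence $\mathbf{v}_k$. We would then bound its one-step decrease by a discrete analogue of the derivative computation in Lemma~\ref{lem-sc-2}.
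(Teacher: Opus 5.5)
Your reduction is the one used in the paper's proof (Appendix~\ref{app:thm2}). DNGD-SC is Nesterov's scheme \eqref{eq:cla_nes2} applied to $\Theta_\beta=\frac12\xt^\top\Lc\xt+\beta\ft(\xt)$ (your $\beta F$), with $1/\eta=\lambda_n+\beta\sqrt{n}L$ and modulus $\mu_\theta=\min\{\lambda_2,\beta\mu/n\}$. The linear rate of \cite{NEST} is then quoted, consensus initialization removes the $\xt_0^\top\Lc\xt_0/\beta$ term so that $B$ does not depend on $\epsilon$, and $\beta=c_\beta\epsilon$ balances the bias.

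\textbf{The gap.} The step you flag as the main obstacle is a genuine gap, and it is not only a matter of constants. Strong convexity must hold at every base point. Along a consensus direction $\mathbf{1}_n\otimes\mathbf{u}$, at a point $\yt$ whose blocks disagree, the curvature of $\Theta_\beta$ is $\beta\sum_i\mathbf{u}^\top\nabla^2 f_i(\yb_i)\mathbf{u}$, with the local Hessians taken at different points. Assumption~\ref{ass-strcon} says nothing about this sum, and the Laplacian contributes nothing in that direction. For instance, take $n=2$, $d=1$, $f_1(x)=\frac12x^2-\sin x$, $f_2(x)=\sin x$ (so $f=\frac12x^2$, $\mu=1$, $L=2$) and $\yt=(-\pi/2,\pi/2)$. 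Then $\mathbf{1}_2^\top\nabla^2\Theta_\beta(\yt)\mathbf{1}_2=-\beta<0$, so $\Theta_\beta$ is not even convex, for any $\beta>0$. No restriction on $\beta$ and no Young splitting of cross terms can fix this. Your fallback potential $\Theta_\beta-\Theta_\beta^\ast+\frac{\mu_\theta}{2}\|\mathbf{v}_k-\xt_\beta^\ast\|^2$ rests on the same inequality.

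\textbf{The quadratic case.} For quadratic $f_i$ the base-point problem disappears and your splitting does work, but it certifies only a constant fraction of $\mu_\theta$ (for admissible $\beta$, e.g.\ $\mu_\theta/2$). For $f_1=\frac12x^2$, $f_2\equiv0$, $a_{12}=1$, the smallest eigenvalue of $\nabla^2\Theta_\beta$ is $1+\frac{\beta}{2}-\sqrt{1+\beta^2/4}<\frac{\beta}{2}=\mu_\theta$. So the prescribed $\alpha=2\sqrt{\mu_\theta}$ puts an overestimated modulus into the momentum, and the estimate-sequence proof of \cite{NEST} does not apply as stated.

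\textbf{The paper does not close this either.} Its case split for $\mu_\theta$-strong convexity applies strong convexity of $\sum_i f_i$ at the non-consensus point $\Xt$ in the first case. In the second case it uses $\Vb^\top\Lc\Vb\ge\lambda_2\|\Vb\|^2$, which holds only for $\Vb$ orthogonal to $\col(\mathbf{1}_n\otimes\mathbf{I}_d)$, together with convexity of each $f_i$, which is not assumed.

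\textbf{How to repair it.} You need one of the following. First, a stronger hypothesis, e.g.\ each $f_i$ being $\frac{\mu}{n}$-strongly convex; then $\Theta_\beta$ is $\frac{\beta\mu}{n}$-, hence $\mu_\theta$-strongly convex, and \cite{NEST} applies verbatim with the stated $\alpha$. Second, for quadratic $f_i$, either a retuned $\alpha$ (changing $k_0$ only by constant factors) or a robustness lemma for overestimated momentum. Under Assumption~\ref{ass-strcon} alone you would need a trajectory-level argument controlling the disagreement, which is a substantially different proof.

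\textbf{Role of the $\beta$ restriction.} In the paper, $\beta\le\min\{\lambda_2/(4L^2),\mu\lambda_2/(16n^2L^2)\}$ plays no role in the curvature claim. It enters only in the translation step of Lemma~\ref{lem-sc-2}. There the Young bound $\nabla\ft^\top(\mathbf{1}_n\otimes\xb_i-\xt)\le2\beta\|\nabla\ft\|^2+\frac{\lambda_2}{8\beta}\xi_i^2$ is combined with $\xt^\top\Lc\xt\ge\frac{\lambda_2}{2}\xi_i^2$. The restriction makes the coefficient $\frac{L^2}{2}-\frac{\lambda_2}{8\beta}$ of $\xi_i^2$ nonpositive and bounds $\frac{16nL^2\beta}{\lambda_2\mu_\theta}$ by $\frac{1}{\beta}$. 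So that splitting is not something you can reuse for strong convexity.

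\textbf{Your translation step.} This is a genuinely different route, and it is sound once a strong-convexity modulus $m$ of order $\beta\mu/n$ is available. The chain is:
\begin{itemize}
\item $\Lc\xt^\ast_\beta=-\beta\Gb(\xt^\ast_\beta)$, together with strong convexity of $f$ at the block average of $\xt^\ast_\beta$, gives $\|\xt^\ast_\beta-\mathbf{1}_n\otimes\Xb^\ast\|=\mathcal{O}(\beta)$ for admissible $\beta$.
\item Consensus initialization gives $\Theta_\beta(\xt_0)-\Theta_\beta^\ast=\mathcal{O}(\beta)$, so the bound $\|\xt_k-\xt^\ast_\beta\|^2\le\frac{2}{m}(\Theta_\beta(\xt_k)-\Theta^\ast_\beta)$ carries an $\epsilon$-independent prefactor.
\item $f(\xb_{i,k})-f^\ast\le\frac{nL}{2}\|\xb_{i,k}-\Xb^\ast\|^2$ concludes.
\end{itemize}
Unlike the paper's route, this does not need the unconstrained minimizer $\Xt^\ast_{\rm nc}$ of $\ft$ on which $D$ is built, and which need not exist. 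Moreover, since $\nabla f(\Xb^\ast)=\mathbf{0}$, it gives a bias of order $\beta^2$ rather than $\beta D/\lambda_2$.
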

}

\color{black}

The proof of Theorem \ref{thm-sc-dis} can be seen in Appendix \ref{app:thm2}. {Here, the identical-initialization condition \(\xb_{i,0}=\xb_{j,0}\) can be satisfied by prescribing
\(\xb_{i,0}=\mathbf{0}_d\).}
Theorem \ref{thm-sc-dis} establishes the required number of iterations to reach \(\epsilon\)-accuracy, which is \(\mathcal{O}\left( \frac{\ln \epsilon^{-1}}{\sqrt{\epsilon}} \right)\) and proportional to \(\sqrt{\kappa}\).

\begin{remark}
\label{remark:common}
    It is important to note that the biased convergence of the algorithm arises due to the {inconsistency between the local optimum and the global optimum for each node in Problem \eqref{eq:DO}}. Therefore, consider the global optimal solution \(\Xb^\ast\) minimizes each local function \(f_i\) of the nodes. A typical example is given by
    \[
    \ba
    \min \sum_{i \in \mV} f_i(\Xb) = \sum_{i \in \mV} \|\Hb_i^\top \Xb - b_i\|^2,
    \ea
    \]
    where \(\Hb_i \in \mathbb{R}^n\) and \(b_i \in \mathbb{R}\). If \(\mathbf{H} := [\mathbf{H}_1 \, \dots \, \mathbf{H}_n]\) has full rank, then \(\Xb^\ast = \mathbf{H}^{-1} \bb\) is the minimal solution of \(f = \sum_{i \in \mV} f_i(\Xb)\) and each \(f_i\). 
    {In this case, following \eqref{eq:appa}, we set $\beta = \frac{\lambda_2 n}{\mu}$. Then, DNGD-SC \eqref{eq:DNGD-SC} achieves unbiased linear convergence and requires 
$
k_0 = \mathcal{O}\!\left(\sqrt{\frac{L}{\mu}} \ln \epsilon^{-1}\right)
$ iterations to attain $\epsilon$-accuracy. This result is consistent with the convergence rate of the centralized Nesterov method for strongly convex problems \cite{NEST}. It is faster than existing distributed algorithms \cite{ADNG,AGAA, EXTRA}, which, however, are designed for more general settings in which inconsistencies between local optima and the global optimum may arise.}
\hfill$\square$
\end{remark}

\section{{Convexity with Time-varying Coefficients}}
\label{sec.tim}
In this section, we consider the following assumptions for the objective functions. 

\begin{assumption}\label{ass-convex}
    The objective function \( f_i \) is convex, i.e.,
    \[
    f_i(\Xb'_i) \geq f_i(\Xb_i) + \nabla f_i(\Xb_i)^\top (\Xb'_i - \Xb_i)
    \]
    for all \( \Xb'_i, \Xb_i \in \mathbb{R}^d \) and \( i \in \mathbb{V} \). In addition, the optimal solution set of Problem \eqref{eq:DO} is compact and non-empty.
    \hfill $\square$
\end{assumption}

If Assumption \ref{ass-convex} holds, there exists an optimal solution set \( \mathbb{S} \subset \mathbb{R}^d \), such that \( \nabla f(\Xb^\ast) = 0 \) and \( f(\Xb^\ast) = f^\ast \) for all \( \Xb^\ast \in \mathbb{S} \).

\subsection{{The Flow and Convergence Analysis}}

For convex optimization, we first let the friction coefficient \( \alpha(t) = \frac{r}{t} \) with \( r > 0 \) in flow \eqref{eq:DGDO-N}, which is common for convex optimization using the Nesterov method \cite{ADEF}. Since the local optimal solution in each node differs from the global optimal solution, we choose \( \beta(t) = \frac{1}{t^p} \) with \( p > 0 \) to ensure that consensus is reached and the solution remains unbiased. Specifically, the form of the flow \eqref{eq:DGDO-N} discussed in this section is given by
\begin{equation}
\label{eq:DGDO-N-C}
\ba
    \ddot\Xb_i + \frac{r}{t} \dot\Xb_i + \sum_{j \in \mathbb{N}_i} \Lb_{ij} \Xb_j + \frac{1}{t^p} \nabla f_i(\Xb_i) &= 0, \quad i \in \mathbb{V},
\ea
\end{equation}
where \( r, p > 0 \) and the initial condition is \( \dot{\Xb}_i(0) = 0 \). 

{
Next, we analyze the optimal choices of the parameters 
$r$ and $p$ with respect to the convergence rate of the flow \eqref{eq:DGDO-N-C}, and establish the corresponding results.}

To provide a more comprehensive understanding of the flow \eqref{eq:DGDO-N-C}, we begin by examining the relationship between 
$r$ and $p$ through the following two lemmas.
\medskip
\begin{lemma}
    \label{lem:1}
    Let $f$ be convex. Then, for \(\Xb_i(t)\) generated by the following flow
    \[
    \ba
        \ddot\Xb + \frac{r}{t}{\dot\Xb}_i + \frac{1}{t^p}\nabla f(\Xb) &= 0,
    \ea
    \]
    with \(r + p \geq 3\), \(r \geq 1\), and \(\dot{\Xb}_i(0) = 0\), it holds that
    $
    f(\Xb(t)) - f^\ast = \mathcal{O}\left(\frac{1}{t^{2-p}}\right),
    $
    for all \(i \in \mathbb{V}\), where \(f_i^\ast\) denotes the optimal value of \(f_i\).
    \hfill
\end{lemma}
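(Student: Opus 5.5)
We prove the bound with a time-weighted Lyapunov function in the style of \cite{ADEF}, now allowing a gradient weight $t^{-p}$. Fix a minimizer $\Xb^\ast$ and consider
\[
\mathcal{E}(t)=t^{2}\big(f(\Xb(t))-f^\ast\big)\,t^{-p}\cdot c_1+\frac12\big\|\lambda(\Xb(t)-\Xb^\ast)+t\dot\Xb(t)\big\|^2+\frac{\xi}{2}\|\Xb(t)-\Xb^\ast\|^2 .
\]
The constants $c_1,\lambda,\xi$ are to be tuned. The natural first attempt is $c_1=1$, $\lambda=r-1$ and $\xi=0$. We will adjust $\lambda$ (and introduce $\xi$) if the cross terms do not cancel.

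\textbf{Step 1: differentiate and substitute the flow.} Differentiating the first term gives
\[
(2-p)t^{1-p}(f-f^\ast)+t^{2-p}\nabla f^\top\dot\Xb .
\]
In the second term we use $t\ddot\Xb=-r\dot\Xb-t^{1-p}\nabla f$, so that
\[
\frac{d}{dt}\big(\lambda(\Xb-\Xb^\ast)+t\dot\Xb\big)=(\lambda+1-r)\dot\Xb-t^{1-p}\nabla f .
\]
The product with $\lambda(\Xb-\Xb^\ast)+t\dot\Xb$ yields three kinds of terms:
\begin{itemize}
\item $-\lambda t^{1-p}\nabla f^\top(\Xb-\Xb^\ast)$;
\item $-t^{2-p}\nabla f^\top\dot\Xb$, which cancels the matching term from the first part;
\item $(\lambda+1-r)\big(\lambda(\Xb-\Xb^\ast)^\top\dot\Xb+t\|\dot\Xb\|^2\big)$.
\end{itemize}
Taking $\lambda=r-1\ge0$ (this is where $r\ge1$ is needed) kills the last group. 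Convexity then gives $-\nabla f^\top(\Xb-\Xb^\ast)\le-(f-f^\ast)$, and hence
\[
\dot{\mathcal{E}}\le\big(2-p-(r-1)\big)t^{1-p}(f-f^\ast)=(3-r-p)\,t^{1-p}(f-f^\ast)\le0
\]
precisely when $r+p\ge3$. Since $\dot\Xb(0)=0$, the energy $\mathcal{E}(0)$ is finite. Some care is needed at $t=0$ because of the factor $t^{2-p}$ and the singular coefficient $r/t$. If $p>2$ the rate is vacuous anyway, so we restrict to $p<2$, where $t^{2-p}\to0$. Alternatively, we can start the argument from any $t_0>0$, where the solution is well defined.

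\textbf{Step 2: conclude.} Monotonicity gives $t^{2-p}(f(\Xb(t))-f^\ast)\le\mathcal{E}(t)\le\mathcal{E}(t_0)$, so
\[
f(\Xb(t))-f^\ast\le\mathcal{E}(t_0)\,t^{p-2}=\mathcal{O}\big(t^{-(2-p)}\big).
\]
Only convexity of $f$ and the existence of a minimizer are used. In the lemma statement the subscript $i$ is notational, since a single trajectory is involved.

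\textbf{Expected obstacle.} The main subtlety is the well-posedness of the ODE near $t=0$ with the singular coefficients $r/t$ and $t^{-p}$, and the finiteness of $\mathcal{E}$ there. We handle this as in \cite{ADEF}, either via a fixed-point argument on $[0,T]$, or simply by taking the Lyapunov argument from $t_0>0$. The latter suffices for an $\mathcal{O}$ statement.
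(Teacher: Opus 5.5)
Your proposal is correct and follows essentially the paper's route: the paper defers to \cite{AVPO}, whose Euclidean Lyapunov function for friction $r/t$ and gradient weight $t^{-p}$ is, up to the factor $(r-1)^2$, exactly your energy $t^{2-p}(f-f^\ast)+\frac12\|(r-1)(\Xb-\Xb^\ast)+t\dot\Xb\|^2$, and the condition used there reduces to $r+p\geq 3$ (the paper also uses this energy, with an added Laplacian term, to prove Theorem~\ref{thm-c-con}). Running the monotonicity argument from some $t_0>0$ is a legitimate way to avoid the singularity at $t=0$ when only an $\mathcal{O}(t^{-(2-p)})$ bound is needed.
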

\medskip

The proof of Lemma \ref{lem:1} can be extended from \cite{AVPO}.
\medskip
\begin{lemma}
    \label{lem:2}
    For \(\Xb_i(t)\) generated by the following flow
    \begin{equation}
    \label{eq:DGDO-N-C_aux}
    \ba
        \ddot{\Xt} + \frac{r}{t}\dot{\Xt} + \Lc\Xt &= 0,
    \ea
    \end{equation}
    with \(r \geq 2\), it holds that
    $
    \Xt^\top \Lc \Xt = \mathcal{O}\left(\frac{1}{t^2}\right).
    $
\end{lemma}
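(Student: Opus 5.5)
Since $\Lc$ is symmetric positive semidefinite ($\Lc=\Lb\otimes\mathbf{I}_d$), the plan is to diagonalize it and then study a Lyapunov function for each mode. Let $\Lb=\mathbf{U}\Lambda\mathbf{U}^\top$ be the orthogonal eigendecomposition. Put $\mathbf{z}=(\mathbf{U}^\top\otimes\mathbf{I}_d)\Xt$. Then \eqref{eq:DGDO-N-C_aux} decouples into the scalar-coefficient equations
\[
\ddot{\mathbf{z}}_i+\frac{r}{t}\dot{\mathbf{z}}_i+\lambda_i\mathbf{z}_i=0,\quad i\in\mV,
\]
and $\Xt^\top\Lc\Xt=\sum_i\lambda_i\|\mathbf{z}_i\|^2$. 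The mode $i=1$ has $\lambda_1=0$ and contributes nothing to the quadratic form. It therefore suffices to show $\lambda_i\|\mathbf{z}_i(t)\|^2=\mathcal{O}(t^{-2})$ for each $i\geq2$.

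For a fixed mode with $\lambda=\lambda_i>0$, I will use the energy-type function $E(t)=t^2\big(\|\dot{\mathbf{z}}\|^2+\lambda\|\mathbf{z}\|^2\big)$, mirroring the Lyapunov argument of \cite{ADEF}. Differentiating and substituting $\ddot{\mathbf{z}}=-\frac{r}{t}\dot{\mathbf{z}}-\lambda\mathbf{z}$ gives
\[
\dot E=2t\big(\|\dot{\mathbf{z}}\|^2+\lambda\|\mathbf{z}\|^2\big)+2t^2\dot{\mathbf{z}}^\top\Big(-\frac{r}{t}\dot{\mathbf{z}}\Big)=2t(1-r)\|\dot{\mathbf{z}}\|^2+2t\lambda\|\mathbf{z}\|^2 .
\]
The term $2t\lambda\|\mathbf{z}\|^2$ is positive, so $E$ alone is not monotone. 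I will add a cross term $E_2=E+c\,t\,\mathbf{z}^\top\dot{\mathbf{z}}$. Its derivative contributes
\[
c\big(\mathbf{z}^\top\dot{\mathbf{z}}+t\|\dot{\mathbf{z}}\|^2-r\,\mathbf{z}^\top\dot{\mathbf{z}}-t\lambda\|\mathbf{z}\|^2\big).
\]
Choosing $c=2$ cancels the $\lambda$ term and yields
\[
\dot E_2=2t(2-r)\|\dot{\mathbf{z}}\|^2+2(1-r)\,\mathbf{z}^\top\dot{\mathbf{z}}=2t(2-r)\|\dot{\mathbf{z}}\|^2+(1-r)\frac{d}{dt}\|\mathbf{z}\|^2 .
\]
With $r\geq2$ the first term is nonpositive. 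Moving the exact derivative to the left, $F(t):=E_2+(r-1)\|\mathbf{z}\|^2$ is nonincreasing, so $F(t)\leq F(t_0)$ for $t\geq t_0$. This is precisely where $r\geq 2$ is used.

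The main obstacle is that $F$ is not obviously coercive: the cross term $2t\,\mathbf{z}^\top\dot{\mathbf{z}}$ can be negative. To handle it I will use Young's inequality, $2t|\mathbf{z}^\top\dot{\mathbf{z}}|\leq \frac{t^2}{2}\|\dot{\mathbf{z}}\|^2+2\|\mathbf{z}\|^2$. This gives
\[
F\geq \frac{t^2}{2}\|\dot{\mathbf{z}}\|^2+\big(\lambda t^2+r-3\big)\|\mathbf{z}\|^2 ,
\]
which is at least $\frac{\lambda t^2}{2}\|\mathbf{z}\|^2$ once $t\geq t_1:=\sqrt{2\max\{0,3-r\}/\lambda}$. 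Combining the two bounds gives $\lambda\|\mathbf{z}_i(t)\|^2\leq 2F(t_1)/t^2$ for $t\geq t_1$.

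Near $t=0$ the equation is singular, but it has a bounded solution with $\dot{\mathbf{z}}(0)=0$, so $F(t_1)$ is finite. Alternatively one can start the argument at any $t_0>0$, since boundedness on $[0,t_0]$ follows from standard continuity. Summing the bounds over the finitely many modes $i\geq2$ gives $\Xt^\top\Lc\Xt=\mathcal{O}(t^{-2})$.
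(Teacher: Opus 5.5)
Your proof is correct, but it takes a different route from the paper. The paper does no computation of its own. It notes that $g^{\frac{r-1}{2}}=\|\Lc^{1/2}\Xt\|^{r-1}$ is convex exactly when $r\ge2$ and defers to the general-friction energy argument of \cite{ADEF}. Concretely, that is the energy $\frac{t^2}{r-1}\Xt^\top\Lc\Xt+\frac{1}{r-1}\|t\dot{\Xt}+(r-1)(\Xt-\Xt^\ast)\|^2$ with $\Xt^\ast$ in the consensus subspace. Its derivative $\frac{2t}{r-1}\Xt^\top\Lc\Xt-2t\langle\Lc\Xt,\Xt-\Xt^\ast\rangle$ is nonpositive because of that convexity. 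You instead build your own energy anchored at $t\dot{\mathbf z}+\mathbf z$ rather than $t\dot{\mathbf z}+(r-1)\mathbf z$, so the dissipation $2(2-r)t\|\dot{\mathbf z}\|^2$ falls on the kinetic term instead of the potential. The two are endpoints of one family: $\|t\dot{\Xt}+a\Xt\|^2+t^2\Xt^\top\Lc\Xt+a(r-1-a)\|\Xt\|^2$ has derivative $2(1+a-r)t\|\dot{\Xt}\|^2+2(1-a)t\,\Xt^\top\Lc\Xt\le0$ for every $a\in[1,r-1]$, an interval that is nonempty exactly when $r\ge2$. Your version is self-contained and needs no power-convexity hypothesis or citation. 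The paper's is a one-line reduction and uses the same anchor $r-1$ as the energy in the proof of Theorem~\ref{thm-c-con}. Your coercivity worry is also unnecessary: completing the square gives $F=\|t\dot{\mathbf z}+\mathbf z\|^2+\lambda t^2\|\mathbf z\|^2+(r-2)\|\mathbf z\|^2\ge\lambda t^2\|\mathbf z\|^2$ for all $t>0$. Together with $F(0)=(r-1)\|\mathbf z(0)\|^2$ (using $\dot{\mathbf z}(0)=0$), this gives $\lambda\|\mathbf z(t)\|^2\le(r-1)\|\mathbf z(0)\|^2/t^2$ directly, without Young's inequality or the threshold $t_1$. The diagonalization then becomes unnecessary too, since the same computation with $\lambda\mathbf z$ replaced by $\Lc\Xt$ works in the full space.
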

\medskip

Note that \(g^{\frac{r-1}{2}}\) is convex for \(g(\Xt) := \Xt^\top \Lc \Xt\) when \(r \geq 2\). The proof of Lemma \ref{lem:2} can be extended from Theorem 7 in \cite{ADEF}. 

{
Building on these two lemmas, we establish that \( r + p \geq 3 \) and \( r \geq 2 \). For faster convergence, \( p \) should be chosen as small as possible, leading to \( p = 3 - r \) under the constraint \( r \geq 2 \) for the flow \eqref{eq:DGDO-N-C}. Given $p>0$, we have $r\in[2,3)$. 

We are now in a position to propose the following theorem to show the convergence results of flow \eqref{eq:DGDO-N-C}. 
}

\medskip

\begin{theorem}\label{thm-c-con}
Let Assumption \ref{ass-convex} hold. Then  for \( \Xb_i(t) \) generated by flow \eqref{eq:DGDO-N-C} with {\( r \in [2,3) \) and \( p = 3-r \), it holds
\begin{equation}
    \label{eq.thm2_a}
f(\Xb_i(t)) - f^\ast=\mathcal{O}\left(\frac{1}{t^{3-r}}\right)
\end{equation}}
Specially, when $r=2$, it holds
\begin{equation}
    \label{eq.thm2_b}
\ba
    f(\Xb_i(t)) - f^\ast \leq& \frac{2B^2 + B\sqrt{4B^2 + 2\lambda_2 \sum_{i \in \mathbb{V}} \|\Xb_i(0) - \Xb^\ast\|^2}}{t \lambda_2} \\
    & + \frac{1}{2t} \sum_{i \in \mathbb{V}} \|\Xb_i(0) - \Xb^\ast\|^2,
\ea
\end{equation}
for some constant \( B > 0 \) and for all \( i \in \mathbb{V} \) and all \( \Xb^\ast \in \mathbb{S} \).

\end{theorem}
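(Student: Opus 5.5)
Write the flow in stacked form $\ddot\Xb+\frac{r}{t}\dot\Xb+\Lc\Xb+\frac{1}{t^{p}}\nabla F(\Xb)=0$, where $\Lc=\Lb\otimes\mathbf{I}_d$, $F(\Xb)=\sum_{i\in\mV}f_i(\Xb_i)$ and $p=3-r$. The proof is a single Lyapunov argument that combines the two mechanisms of Lemmas~\ref{lem:1} and \ref{lem:2}. Fix $\Xb^\ast\in\mathbb{S}$ and set $\Xb^\star=\mathbf{1}_n\otimes\Xb^\ast$. The candidate energy is
\[
\mathcal{E}(t)=t^{2}\Big(\tfrac{1}{t^{p}}\big(F(\Xb)-f^\ast\big)+\tfrac12\Xb^\top\Lc\Xb\Big)+\tfrac12\big\|(r-1)(\Xb-\Xb^\star)+t\dot\Xb\big\|^2 .
\]
This is the standard Su--Boyd--Candès energy, applied to the time-varying potential $\Phi_t(\Xb)=t^{-p}(F(\Xb)-f^\ast)+\frac12\Xb^\top\Lc\Xb$. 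Note that $\Phi_t\ge \frac12\Xb^\top\Lc\Xb - t^{-p}\big(f^\ast-F(\Xb)\big)_+$. Since $\Lc\Xb^\star=0$, $\Xb^\star$ is a minimizer of the consensus part.

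Step 1: differentiate $\mathcal{E}$ along trajectories. The cross terms cancel as in the centralized case, and convexity of $F$ and of the quadratic give $-(r-1)\langle\nabla\Phi_t(\Xb),\Xb-\Xb^\star\rangle\le-(r-1)\big(t^{-p}(F-f^\ast)+\Xb^\top\Lc\Xb\big)$. Differentiating the explicit time factors gives $(2-p)t^{1-p}(F-f^\ast)+t\,\Xb^\top\Lc\Xb$. With $p=3-r$ the function-value coefficient becomes $(2-p)-(r-1)=0$. The consensus coefficient is $1-2(r-1)/2\cdot$(after halving)$\;=2-r\le0$ for $r\ge2$; this is exactly where the constraints $r+p\ge3$ and $r\ge2$ from the two lemmas enter. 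This yields $\dot{\mathcal{E}}\le0$, provided the remaining difficulty below is handled.

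Step 2, the main obstacle: $F(\Xb)-f^\ast$ need not be nonnegative off consensus, because the local minimizers differ, so $\mathcal{E}$ is not obviously bounded below. I would handle this by decomposing $\Xb=\mathbf{1}_n\otimes\bar\Xb+\Xt$ with $\Xt\perp\col(\mathbf{1}_n\otimes\mathbf{I}_d)$. Then $F(\Xb)-f^\ast\ge f(\bar\Xb)-f^\ast-G\|\Xt\|\ge -G\|\Xt\|$, where $G$ bounds the gradients on a sublevel set; such a set exists because $\mathbb{S}$ is compact, and bounded trajectories follow from the energy itself via a bootstrap. Using $\Xb^\top\Lc\Xb\ge\lambda_2\|\Xt\|^2$, we get
\[
t^{2-p}(F-f^\ast)+\tfrac{t^2}{2}\lambda_2\|\Xt\|^2\ge t^{2-p}\big(f(\bar\Xb)-f^\ast\big)-\tfrac{G^2 t^{2-2p}}{2\lambda_2},
\]
so $\mathcal{E}$ is bounded below up to an error of order $t^{2-2p}$. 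For $r=2$ (so $p=1$) this error is a constant $B^2/\lambda_2$-type term, which accounts for the shape of \eqref{eq.thm2_b}. Concretely, I expect $B$ to be a gradient bound and the square-root term to come from solving a quadratic inequality in $\|\Xt\|$.

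Step 3: conclude. $\mathcal{E}(t)\le\mathcal{E}(0)=\frac{(r-1)^2}{2}\sum_i\|\Xb_i(0)-\Xb^\ast\|^2$, using $\dot\Xb(0)=0$. Combining this with Step 2 gives $t^{2-p}\big(f(\bar\Xb)-f^\ast\big)=\mathcal{O}(1)$ and $\|\Xt\|=\mathcal{O}(t^{-1})$ (the bound from $t^2\lambda_2\|\Xt\|^2/2$ minus the linear term). Finally, $f(\Xb_i)-f^\ast\le f(\bar\Xb)-f^\ast+nG\|\Xb_i-\bar\Xb\|$, which gives $\mathcal{O}(t^{-(2-p)})+\mathcal{O}(t^{-1})=\mathcal{O}(t^{-(r-1)})$. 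Since the theorem states $\mathcal{O}(t^{-(3-r)})$, I will track the exponents carefully; with $r\in[2,3)$ the dominant term is whichever decays slower, and for $r=2$ both give $1/t$. Keeping explicit constants in the $r=2$ case, and solving the quadratic inequality $\frac{\lambda_2 t^2}{2}\|\Xt\|^2-Bt\|\Xt\|\le\mathcal{E}(0)$, produces \eqref{eq.thm2_b}.
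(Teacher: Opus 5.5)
Your energy $\mathcal{E}$ is exactly the paper's $V$, because $t^{2-p}=t^{r-1}$. Step~1 is the paper's computation $\dot V=(r-1)t^{r-2}\big[(F-f^\ast)-(\Xb-\Xb^\star)^\top\nabla F\big]+(2-r)t\,\Xb^\top\Lc\Xb\le0$. For $r=2$ your Steps~2--3 do give \eqref{eq.thm2_b}, with $B$ of order $nG$. Splitting about the average $\bar{\Xb}$ and using $\Xb^\top\Lc\Xb\ge\lambda_2\|\Xt\|^2$ ($\Xt$ being your disagreement component) is, if anything, cleaner than the paper's node-referenced $\xi$.

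The gap is Step~3 for $r\in(2,3)$, i.e.\ the general rate \eqref{eq.thm2_a}. In Step~2 you correctly found the lower-bound defect $\frac{G^2}{2\lambda_2}t^{2-2p}=\frac{G^2}{2\lambda_2}t^{2r-4}$, which is constant only when $r=2$. Yet Step~3 asserts $t^{2-p}(f(\bar{\Xb})-f^\ast)=\mathcal{O}(1)$ and $\|\Xt\|=\mathcal{O}(t^{-1})$, and neither follows for $r>2$. What $\mathcal{E}(t)\le\mathcal{E}(0)$ actually gives is $\frac{\lambda_2t^2}{2}\|\Xt\|^2-Gt^{r-1}\|\Xt\|\le\mathcal{E}(0)$ and $t^{r-1}(f(\bar{\Xb})-f^\ast)\le\mathcal{E}(0)+\frac{G^2}{2\lambda_2}t^{2r-4}$, hence
\[
\|\Xt\|\le\frac{2G}{\lambda_2}\,t^{r-3}+\sqrt{\frac{2\mathcal{E}(0)}{\lambda_2}}\;t^{-1},\qquad f(\bar{\Xb})-f^\ast\le\mathcal{E}(0)\,t^{1-r}+\frac{G^2}{2\lambda_2}\,t^{r-3}.
\]
For $r>2$ the linear term dominates the quadratic inequality. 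This is not merely a bookkeeping slip. The disagreement dynamics are forced by $t^{-(3-r)}$ times the projection of $(\nabla f_i(\Xb^\ast))_{i\in\mV}$ off the consensus subspace, which is nonzero whenever local and global minimizers differ. So generically $\|\Xt\|$ is of order $t^{-(3-r)}$, and $\|\Xt\|=\mathcal{O}(t^{-1})$ is false for every $r>2$. Your final line is also inconsistent on its own terms: for $r-1\ge1$ one has $\mathcal{O}(t^{-(r-1)})+\mathcal{O}(t^{-1})=\mathcal{O}(t^{-1})$. You also leave the mismatch with the stated $\mathcal{O}(t^{-(3-r)})$ unresolved, and that mismatch is exactly what determines the theorem's exponent.

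With the $t^{2r-4}$ defect carried through, $f(\Xb_i)-f^\ast\le f(\bar{\Xb})-f^\ast+nG\|\Xt\|=\mathcal{O}(t^{r-3})+\mathcal{O}(t^{-1})+\mathcal{O}(t^{1-r})=\mathcal{O}(t^{-(3-r)})$, which is \eqref{eq.thm2_a}. This is the paper's mechanism. It bounds $F(\Xb)-f^\ast\le V(0)/t^{r-1}$ directly (the other terms of $V$ are nonnegative), so the optimization part is $\mathcal{O}(t^{1-r})$. The rate is then set by the consensus error $\xi(t)=\mathcal{O}(t^{r-3})$ from the quadratic inequality \eqref{eq:boundxi}, in which the term $Bt^{r-1}\xi$ dominates. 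For $r>2$, consensus rather than optimization is the bottleneck.

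Relatedly, obtain the gradient bound $G$ the way the paper does. It uses $\Xb^\top\Lc\Xb\to0$ and $F(\Xb)-f^\ast\le\mathcal{E}(0)t^{1-r}$, neither of which needs $G$, together with compactness of $\mathbb{S}$. Do not rely on an unspecified bootstrap on $\mathcal{E}$, whose lower bound for $r>2$ itself deteriorates like $t^{2r-4}$.
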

\medskip

{
It follows from \eqref{eq.thm2_a} that the optimal convergence rate is achieved at \(r = 2\) and \(p = 1\). Now we are ready to propose the corresponding discrete-time algorithms.}

\subsection{Nesterov Flow Inspired Discrete-time Algorithm}

Based on flow \eqref{eq:DGDO-N-C} with $r=2,\ p=1$ and \eqref{eq:alphabeta2}, we propose the following Distributed Nesterov Gradient Descent algorithm for Convex optimization (DNGD-C):
\begin{equation}
\label{eq:DNGD-C}
\ba
\xb_{i,k+1}
=&\ \yb_{i,k}
{-}\eta\sum_{{j\in\mathbb{N}_i}}\Lb_{ij}\yb_{j,k}
{-}\frac{\sqrt{\eta}}{k+1}\nabla f_i(\yb_{i,k}),\\
\yb_{i,k}
=&\ \xb_{i,k}
+\frac{k-1}{k+1}(\xb_{i,k}-\xb_{i,k-1}),
\quad i\in\mathbb{V},
\ea
\end{equation}
where \({\eta>0}\) and
\(\xb_{i,0}=\yb_{i,0}\).

For DNGD-C \eqref{eq:DNGD-C}, a theorem similar to Theorem \ref{thm-c-con} can be obtained as follows,
\medskip
\begin{theorem}\label{thm-c-dis}
Let Assumptions \ref{ass-convex} hold. Then with \(0 < \eta \leq \left( \frac{\sqrt{L^2 + 4\lambda_n} - L}{2\lambda_n} \right)^2\), for \(\xb_{i,k}\) generated by DNGD-C \eqref{eq:DNGD-C}, we have
\begin{equation}
\label{eq:concl}
\ba
f(\xb_{i,k}) - f^\ast \leq &\ \frac{2B^2 + B\sqrt{4B^2 + 2\lambda_2 \sum_{i \in \mathbb{V}} \|\Xb_i(0) - \Xb^\ast\|^2}}{k\sqrt{\eta}\lambda_2} \\
& + \frac{\sum_{i \in \mathbb{V}} \|\Xb_i(0) - \Xb^\ast\|^2}{k\sqrt{\eta}},
\ea
\end{equation}
for some constant \(B > 0\) and for all \(i \in \mathbb{V}\) and all \(\Xb^\ast \in \mathbb{S}\). In addition, if \(\eta = \left( \frac{\sqrt{L^2 + 4\lambda_n} - L}{2\lambda_n} \right)^2\), for a fixed \(\lambda_2\), we have
\[
\ba
f(\xb_{i,k}) - f^\ast \leq &\ \mathcal{O}\left( \frac{\sqrt{\kappa}}{k} \right).
\ea
\] 
\end{theorem}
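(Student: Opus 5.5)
We will prove Theorem \ref{thm-c-dis} with a discrete Lyapunov argument that mirrors the continuous-time proof of Theorem \ref{thm-c-con} for $r=2$, $p=1$. The time scale is $t_k=k\sqrt{\eta}$, so that the gradient coefficient $\frac{\sqrt{\eta}}{k+1}$ corresponds to $\eta\cdot\frac{1}{t}$.

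\textbf{Setup.} First write DNGD-C \eqref{eq:DNGD-C} in stacked form: $\xb_{k+1}=\yb_k-\eta(\Lb\otimes\mathbf{I}_d)\yb_k-\frac{\sqrt{\eta}}{k+1}\nabla F(\yb_k)$, where $F(\xb)=\sum_i f_i(\xb_i)$. For each $k$ this is a gradient step on the time-varying potential
\[
\Phi_k(\xb)=\frac{k+1}{\sqrt{\eta}}\cdot\frac{\eta}{2}\xb^\top(\Lb\otimes\mathbf{I}_d)\xb+F(\xb),
\]
taken with step size $\frac{\sqrt{\eta}}{k+1}$. Equivalently, multiply through by $(k+1)/\sqrt{\eta}$ to get $\frac{k+1}{\sqrt\eta}(\yb_k-\xb_{k+1})=\nabla\Psi_k(\yb_k)$ with $\Psi_k=\frac{(k+1)\sqrt\eta}{2}\xb^\top\Lb\xb+F$. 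The smoothness constant of the combined step satisfies $\eta\lambda_n+\sqrt{\eta}L\le 1$, which is exactly $\sqrt{\eta}\le\frac{\sqrt{L^2+4\lambda_n}-L}{2\lambda_n}$. This condition yields the descent-type inequality $\Phi(\xb_{k+1})\le\Phi(\yb_k)-\frac{1}{2}\cdot(\text{step})\|\nabla\Phi\|^2$ in the appropriate scaling, together with the standard three-point inequality against any comparison point $\mathbf{z}$:
\[
\Phi(\xb_{k+1})\le\Phi(\mathbf{z})+\langle\nabla\Phi(\yb_k),\yb_k-\mathbf{z}\rangle-\tfrac{s}{2}\|\nabla\Phi(\yb_k)\|^2,
\]
which uses convexity of $F$ and of the quadratic consensus term.

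\textbf{Lyapunov function.} Next we build the discrete analogue of the energy used for $r=2$. Let $\mathbf{z}_k=\xb_k+\frac{k}{2}(\xb_k-\xb_{k-1})$, the discrete version of $\Xb+\frac{t}{r-1}\dot\Xb$ with $r-1=1$; one checks that the momentum coefficient $\frac{k-1}{k+1}$ gives the clean recursion $\mathbf{z}_{k+1}=\mathbf{z}_k-\frac{k+1}{2}\cdot(\text{gradient step})$. Define
\[
E_k=\frac{k^2\eta}{4}\Big(\frac{1}{\sqrt\eta k}\big(F(\xb_k)-F(\mathbf{1}\otimes\Xb^\ast)\big)+\tfrac{1}{2}\xb_k^\top\Lb\xb_k\Big)+\tfrac12\|\mathbf{z}_k-\mathbf{1}\otimes\Xb^\ast\|^2 .
\]
The weights $k$ on $F$ and $k^2$ on the Laplacian term reproduce $t\,(f-f^\ast)$ and $t^2\,\Xb^\top\Lc\Xb$ in the flow. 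Combining the three-point inequality at $\mathbf{z}=\xb_k$ (weight $\frac{k-1}{k+1}$) and at $\mathbf{z}=\mathbf{1}\otimes\Xb^\ast$ (weight $\frac{2}{k+1}$) then gives $E_{k+1}\le E_k$ up to the discrepancy between the indices $k$ and $k+1$ in the time-varying weights. Because the Laplacian weight grows, that discrepancy has the favourable sign, exactly as $\frac{d}{dt}(t^2)\Xb^\top\Lc\Xb$ is absorbed in the flow when $r\ge2$. The $F$ term must be handled with care, since $F(\xb_k)-F(\mathbf{1}\otimes\Xb^\ast)$ can be negative off consensus; here we use the compact optimal set to bound $\nabla f_i$ on the relevant region by a constant, giving a term of the form $-B\|\xb_k-\mathbf{1}\otimes\bar\xb_k\|$.

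\textbf{Conclusion.} Monotonicity yields $E_k\le E_0=\frac12\sum_i\|\Xb_i(0)-\Xb^\ast\|^2$. This gives two bounds: $\frac{k\sqrt{\eta}}{4}\big(F(\xb_k)-F^\ast\big)\le E_0$ after dropping nonnegative parts, and $\frac{k^2\eta}{8}\lambda_2\|\xb_k-\mathbf{1}\otimes\bar\xb_k\|^2\le E_0+\frac{k\sqrt\eta}{4}B\|\xb_k-\mathbf{1}\otimes\bar\xb_k\|$. Solving the latter quadratic inequality in $\|\xb_k-\mathbf{1}\otimes\bar\xb_k\|$ produces the square-root expression $2B^2+B\sqrt{4B^2+2\lambda_2 E'}$ over $k\sqrt\eta\lambda_2$. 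Then $f(\xb_{i,k})-f^\ast\le F(\xb_k)-F^\ast+B\,\|\xb_k-\mathbf{1}\otimes\xb_{i,k}\|$, again by bounded gradients, which yields \eqref{eq:concl}. For the final rate, set $\sqrt\eta=\frac{2}{L+\sqrt{L^2+4\lambda_n}}=\Theta(\lambda_n^{-1/2})$. With $\lambda_2$ fixed this gives $\frac{1}{k\sqrt\eta\lambda_2}=\mathcal{O}\big(\frac{\sqrt{\lambda_n}}{k}\big)=\mathcal{O}\big(\frac{\sqrt\kappa}{k}\big)$.

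\textbf{Main obstacle.} The hardest step is verifying $E_{k+1}\le E_k$. The time-varying weight on $F$ relative to the Laplacian means the potential $\Phi_k$ changes with $k$, so the index-shift error terms must be shown to have a good sign or be absorbable. I would first try to reproduce the continuous computation term by term, and fall back on weakening constants (for example, the factor $\frac12$ versus $1$ in front of $E_0$) if exact cancellation fails.
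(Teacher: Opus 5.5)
Your architecture matches the paper's. You use the potential $\Phi_k(\xt)=\frac{1}{k+1}\ft(\xt)+\frac{\sqrt{\eta}}{2}\xt^\top\Lc\xt$ with step $\sqrt{\eta}$ under $\sqrt{\eta}L+\eta\lambda_n\le 1$, three-point inequalities at $\xt_k$ and at $\Xt^\ast$, a Lyapunov function weighting $\ft-f^\ast$ by $k$ and the consensus term by $k^2$, and then the quadratic inequality in the disagreement. But the two steps where the work happens do not go through as written.

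First, your auxiliary sequence does not match the momentum $\frac{k-1}{k+1}$. For $\mathbf{z}_k=\xt_k+\frac{k}{2}(\xt_k-\xt_{k-1})$, the claimed clean recursion would require $(k+3)(k-1)=k(k+1)$ for all $k$, which is false. Your $\mathbf{z}_k$, the weights $\frac{k-1}{k+1},\frac{2}{k+1}$, and the factor $\frac14$ in $E_k$ are, up to index shifts, those of the $r=3$ scheme with momentum $\frac{k-1}{k+2}$. They also contradict your own heuristic $\Xb+t\dot{\Xb}$. For $r=2$ the right choices are:
the sequence $\mathbf{z}_k=(k+1)\yt_k-k\xt_k=\xt_k+(k-1)(\xt_k-\xt_{k-1})$, which does satisfy $\mathbf{z}_{k+1}=\mathbf{z}_k-(k+1)\sqrt{\eta}\,\nabla\Phi_k(\yt_k)$;
the convex weights $\frac{k}{k+1},\frac{1}{k+1}$, since $\yt_k-\frac{k}{k+1}\xt_k-\frac{1}{k+1}\Xt^\ast=\frac{1}{k+1}(\mathbf{z}_k-\Xt^\ast)$;
and exactly the paper's $V_k=\frac{\eta}{2}k^2\xt_k^\top\Lc\xt_k+\sqrt{\eta}\,k(\ft(\xt_k)-f^\ast)+\frac12\|\mathbf{z}_k-\Xt^\ast\|^2$.
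With your $\mathbf{z}_k$ and weights, the cross terms $\nabla\Phi_k(\yt_k)^\top(\mathbf{z}_k-\Xt^\ast)$ do not cancel.

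Second, and more seriously, you leave the decrease step open, and the heuristic you offer for it points the wrong way. With the correct $V_k$, multiply the two inequalities (weights $k$ and $1$) by $(k+1)\sqrt{\eta}$ and add the $\mathbf{z}$-recursion. This leaves exactly the paper's residual
\[
V_{k+1}-V_k\le\frac{\eta}{2}\left(k\,\xt_k^\top\Lc\xt_k-(k+1)\,\yt_k^\top\Lc\yt_k\right).
\]
Its positive part is produced by the growth of the weight $k^2$; in the flow, $\frac{d}{dt}\frac{t^2}{2}$ gives $+t\Xt^\top\Lc\Xt$. So growth is the unfavourable direction, not the favourable one. Moreover, at $r=2$ the flow identity has coefficient $2-r=0$ on $t\Xt^\top\Lc\Xt$, so there is no slack to absorb discretization error. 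As an inequality between the vectors $\xt_k,\yt_k$ the residual can be positive (take $\yt_k$ at consensus and $\xt_k$ not), so weakening constants cannot repair it.

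The paper reduces this step to $(k+1)\yt_k^\top\Lc\yt_k\ge k\,\xt_k^\top\Lc\xt_k$ and argues that by induction. A cleaner closure, and the idea your plan is missing, is to keep the exact curvature of the quadratic consensus term in the three-point inequality at $\xt_k$ too:
\[
\Phi_k(\xt_k)\ge\Phi_k(\yt_k)+\nabla\Phi_k(\yt_k)^\top(\xt_k-\yt_k)+\frac{\sqrt{\eta}}{2}(\xt_k-\yt_k)^\top\Lc(\xt_k-\yt_k).
\]
This is the same move the paper makes at $\Xt^\ast$ to obtain $-\frac{\sqrt{\eta}}{2}\yt_k^\top\Lc\yt_k$. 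Writing $\|v\|_{\Lc}^2:=v^\top\Lc v$, it adds $-\frac{\eta}{2}k(k+1)\|\xt_k-\yt_k\|_{\Lc}^2$, and the residual collapses to
\[
\frac{\eta}{2}\left(k\|\xt_k\|_{\Lc}^2-(k+1)\|\yt_k\|_{\Lc}^2-k(k+1)\|\xt_k-\yt_k\|_{\Lc}^2\right)=-\frac{\eta}{2}\|(k+1)\yt_k-k\xt_k\|_{\Lc}^2\le 0.
\]
Hence $V_k\le V_0=\frac12\|\xt_0-\Xt^\ast\|^2$. From there your conclusion follows the paper's route: the quadratic inequality in the disagreement with bounded gradients, then $\sqrt{\eta}=\Theta(\lambda_n^{-1/2})$ for the $\mathcal{O}(\sqrt{\kappa}/k)$ rate.
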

\medskip

The proof of Theorem \ref{thm-c-dis} can be found in Appendix \ref{app:thm4}. From Theorem \ref{thm-c-dis}, we know that DNGD-C provides the convergence rate \(\mathcal{O}\left( \frac{1}{k} \right)\). Moreover, for a larger condition number of the Laplacian matrix, DNGD-C exhibits slower convergence rates.

{
For the case where the objective function is convex and satisfies the property in Remark~\ref{remark:common}, i.e., local optima coincide with global optima, we can directly adopt the parameter design of the centralized Nesterov method, i.e., $p=0$, $r=3$. In this case, the convergence rate of the algorithm will be the classical $O\left(\frac{1}{k^2}\right)$.}

\begin{remark}
The convergence rates established in Theorem~\ref{thm-sc-dis} for \eqref{eq:DNGD-SC} and in Theorem~\ref{thm-c-dis} for \eqref{eq:DNGD-C} do not provide improvements in terms of convergence rate and are even inferior to the Nesterov-accelerated variants of DPD \cite{OAFS,AADS,DPDN} and DGT \cite{MB-GT,ADNG, AGAA}. Nevertheless, the DGD framework considered here entails lower communication overhead than DPD and DGT. At each iteration, both DPD and DGT require transmitting an auxiliary variable with the same dimension as $\xb$, leading to roughly twice the communication cost of DGD \footnote{Though some DPD variants eliminate this extra exchange \cite{XY-LCOF}, they do so by introducing additional inter-node initial conditions, reducing robustness.}, {while their accelerated variants require
even more communication.}
As a result, though the DPD- and DGT-based methods have faster convergence rates, the communication cost required to achieve a certain accuracy in convergence is greater than that of the proposed \eqref{eq:DNGD-SC} and \eqref{eq:DNGD-C}, a benefit that is also confirmed by the simulations in Section~\ref{sec.num}.
 \hfill $\square$
\end{remark}
\color{black}


\section{Numerical Simulations}
\label{sec.num}

\subsection{Benchmark against Existing Methods}
\label{sec:V-A}
In this paper, we have proposed two algorithms: DNGD-SC \eqref{eq:DNGD-SC} and DNGD-C \eqref{eq:DNGD-C}. In the following, we compare the two proposed algorithms DNGD-SC \eqref{eq:DNGD-SC} and DNGD-C \eqref{eq:DNGD-C} in convergence performances with some existing methods.

First, we consider the strongly convex optimization problem 
\eqref{sim2}:

\begin{equation}
\label{sim2}
\min_{\xb} \sum_{i \in \mathbb{V}} f_i(\xb)
= \sum_{i \in \mathbb{V}}  \|\mathbf{H}_i^\top \xb - \bb_i\|^2  ,
\end{equation}
where $\Hb_i=[\Hb_{i,1},\Hb_{i,2}]$, \(\mathbf{H}_i, \mathbf{H}_i' \in \mathbb{R}^d\) and \(\bb_i\in \mathbb{R}^2\) are randomly generated. The matrices \(\mathbf{H} = [\mathbf{H}_{i,1}, \dots, \mathbf{H}_{n,1}]\) and \(\mathbf{H}' = [\mathbf{H}_{1,2}, \dots, \mathbf{H}_{n,2}]\) are constructed to have full row rank.
{{
Problem \eqref{sim2} is a distributed linear least‑squares problem, which is widely used in collaborative estimation tasks.}}
\color{black}
Assumption~\ref{ass-strcon} holds for \eqref{sim2}, and the minimizer of \(\sum_{i \in \mathbb{V}} f_i(\xb)\) differs from those of \(f_i(\xb)\), avoiding the setting in Remark~\ref{remark:common}.

We compare DNGD-SC \eqref{eq:DNGD-SC} with existing methods in reaching an accuracy of \(10^{-3}\), as reported in Table~\ref{tab:1}(a). For completeness, we also include DGD-based methods (DGD, D-NG, D-NC, and Acc-B-DGD), as well as DGT, its accelerated variant Acc-DNGD-SC and DPD.

Fig.~\ref{fig.sc} shows the evolution of \(\sum_{i \in \mathbb{V}} (f(\xb_{i,k}) - f^\ast)\) versus the total number of communicated scalars. DNGD-SC achieves linear convergence and outperforms DGD-based methods, though with a bias, consistent with Theorem~\ref{thm-sc-dis}. In contrast, DPD, DGT and Acc-DNGD-SC converge linearly without bias. Nevertheless, due to lower communication cost per iteration, DNGD-SC reaches the target accuracy faster despite the faster asymptotic rates of DPD and DGT-based methods.

\color{black}

\begin{figure}[htp]
    \centering
   \includegraphics[width=0.48\textwidth]{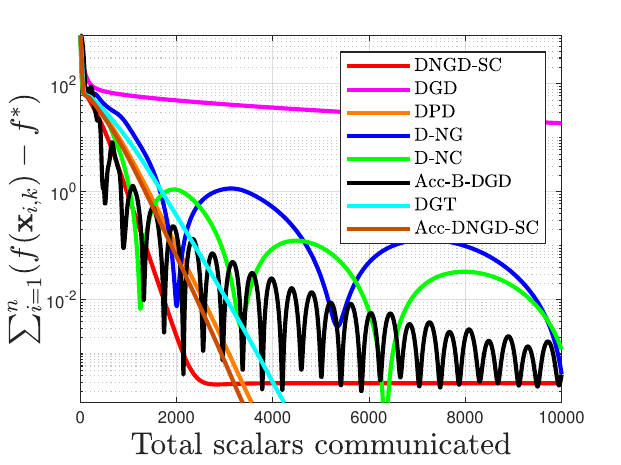}
    \caption{{The values \( \sum_{i \in \mathbb{V}} (f(\mathbf{x}_{i,k}) - f^\ast) \) for Problem \eqref{sim2} generated by different algorithms.}}
    \label{fig.sc}
\end{figure}

Next, we consider the following convex optimization problem \eqref{sim1}:
{
\begin{equation}
\label{sim1}
\min  \sum_{i \in \mathbb{V}} f_i(\mathbf{x}) = \sum_{i \in \mathbb{V}}  \log\Bigl(1 + \exp\bigl(-y_{i} \mathbf{x}^\top \mathbf{a}_{i}\bigr)\Bigr),
\end{equation}
where $a_i\in\mathbb{R}^d$ and $y_i\in\{-1,1\}$. Problem \eqref{sim1} is a distributed logistic regression problem, which is widely used in binary classification tasks.
 It can be easily proven that Assumption \ref{ass-convex} holds for Problem \eqref{sim1}. 
We compare the performance of DNGD-C \eqref{eq:DNGD-C} with several algorithms from the literature, including DGD-based methods (DGD, D-NG, D-NC), as well as DPD, DGT, and their accelerated variants (DPDN and Acc-DNGD-NSC). The results are shown in Fig.~\ref{fig.con}. DNGD-C achieves a faster convergence rate of \( \mathcal{O}\!\left(\frac{1}{k}\right) \) compared to DGD-based methods such as DGD and D-NG, confirming Theorem~\ref{thm-c-dis}. Compared to D-NC, Acc-DNGD-NSC, and DPD- and DGT-based methods, DNGD-C avoids extra communication overhead, resulting in faster convergence in the stages shown in the figure, although it is slower in the long term.
}

\begin{figure}[htp]
    \centering

   \includegraphics[width=0.48\textwidth]{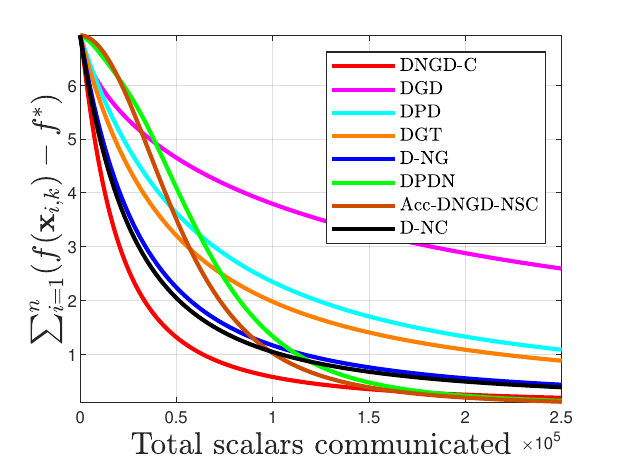}

    \caption{{The values $\sum_{i\in\mV}(f(\mathbf{x}_{i,k})-f^\ast)$ for Problem \eqref{sim1} generated by different algorithms.}}
    \label{fig.con}
\end{figure}

\section{Conclusion}
\label{sec.con}
In this paper, we propose and analyze a general flow that corresponds to the distributed Nesterov gradient descent algorithm. By investigating the flow under different parameter designs, we derive two discrete-time algorithms, DNGD-SC and DNGD-C. These algorithms  require \( \mathcal{O}\left(\frac{\ln \epsilon^{-1}}{\sqrt{\epsilon}}\right) \) iterations to reach \( \epsilon \)-accuracy for strongly convex optimization and achieve a convergence rate of \( \mathcal{O}\left(\frac{1}{k}\right) \) for convex optimization. 

Moving forward, we aim to further explore the potential of combining higher-order algorithms with the Nesterov method, particularly through the lens of continuous-time flows. Additionally, the development of new discrete-time algorithms, using techniques such as Euler discretization, will be a key direction for future research.

\appendix

\section{Proof of Lemma  \ref{lem-sc-2}}
\label{sec.hfb}

Flow \eqref{eq:DGDO-N} can be written compactly as
\begin{equation}
    \label{eq:DGDO-N-SC-ii_com}
    \ba
    \ddot{\Xt} + 2\sqrt{\mu_\theta}{\dot\Xt} + \nabla\Theta_\beta = 0,
    \ea
\end{equation}
where $\Xt(t) := [\Xb_1(t); \dots; \Xb_n(t)]$, $\Lc := \Lb \otimes \mathbf{I}_d$, $\mu_{\theta} := \min\left\{\lambda_2, \frac{\beta\mu}{n}\right\}$, and $\Theta_\beta(\Xt) := \frac{1}{2} \Xt^\top \Lc \Xt + \beta \ft(\Xt)$ with $\ft(\Xt) := \sum_{i \in \mV} f_i(\Xb_i)$.

Note that $\Lb$ is semi-definite with the null space $\col(\mathbf{1}_n)$, which implies
\begin{equation}
\label{eq:LV}
    \Lc \Vb = \mathbf{0}
\end{equation}
for all $\Vb \in \col(\mathbf{1}_n \otimes \mathbf{I}_d)$.
For the function $\Theta_\beta(\Xt)$, we have the following bounds
\[
\ba
&\Theta_\beta(\Xt + \Vb) \\
\geq &\frac{1}{2} \Xt^\top \Lc \Xt + \beta \ft(\Xt + \Vb) \\
\geq & \Theta_\beta(\Xt) + \Vb^\top \nabla \Theta_\beta(\Xt) + \frac{\beta \mu}{2n} \|\Vb\|^2 
\ea
\]
for all $\Vb \in \col(\mathbf{1}_n \otimes \mathbf{I}_d)$ by \eqref{eq:LV} and Assumption \ref{ass-strcon}, and
\[
\ba
&\Theta_\beta(\Xt + \Vb) \\
\geq & \frac{1}{2} (\Xt + \Vb)^\top \Lc (\Xt + \Vb) + \beta \ft(\Xt + \Vb) \\
\geq & \Theta_\beta(\Xt) + \Vb^\top \nabla \Theta_\beta(\Xt) + \frac{\lambda_2}{2} \|\Vb\|^2
\ea
\]
for all $\Vb \notin \col(\mathbf{1}_n \otimes \mathbf{I}_d)$.
This shows that $\Theta_\beta$ is $\mu_\theta$-strongly convex. 

Next, we introduce the following lemma from \cite{NEST}.
\begin{lemma}\label{thm-scODE}
Let $f$ be $\mu$-strongly convex. Then, for $\Xb(t)$ generated by the flow \eqref{eq:scODE}, it holds
\[
f(\Xb(t)) - f^\ast = \mathcal{O}(e^{-\sqrt{\mu} t}).
\]
\end{lemma}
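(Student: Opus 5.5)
The plan is a Lyapunov argument for the flow \eqref{eq:scODE}, in the spirit of \cite{NEST,ADEF}. Let $\Xb^\ast$ be the unique minimizer of $f$. I would use the energy
\[
\mathcal{E}(t) := f(\Xb(t)) - f^\ast + \frac{1}{2}\bigl\|\dot{\Xb}(t) + \sqrt{\mu}\,(\Xb(t)-\Xb^\ast)\bigr\|^2 .
\]
The goal is the differential inequality $\dot{\mathcal{E}} \le -\sqrt{\mu}\,\mathcal{E}$. Grönwall's inequality then gives
\[
f(\Xb(t)) - f^\ast \le \mathcal{E}(t) \le \mathcal{E}(0)\,e^{-\sqrt{\mu} t}.
\]
Since $\dot{\Xb}(0)=0$, we have $\mathcal{E}(0) = f(\Xb(0)) - f^\ast + \frac{\mu}{2}\|\Xb(0)-\Xb^\ast\|^2$, which is a finite constant fixed by the initial value. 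This is exactly the claimed bound $\mathcal{O}(e^{-\sqrt{\mu}t})$.

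\emph{Preliminary: the trajectory exists.} Because $\nabla f$ is globally $L$-Lipschitz, the first-order system in $(\Xb,\dot{\Xb})$ has a globally Lipschitz right-hand side. Hence a unique $C^2$ solution exists on $[0,\infty)$, and $\mathcal{E}$ is differentiable along it.

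\emph{Step 1: differentiate the energy.} Substitute $\ddot{\Xb} = -2\sqrt{\mu}\dot{\Xb} - \nabla f(\Xb)$ into the derivative of the quadratic term; this yields $\frac{d}{dt}\bigl(\dot{\Xb}+\sqrt{\mu}(\Xb-\Xb^\ast)\bigr) = -\sqrt{\mu}\dot{\Xb} - \nabla f(\Xb)$. Expanding, the two $\nabla f(\Xb)^\top\dot{\Xb}$ terms cancel, and we are left with
\[
\dot{\mathcal{E}} = -\sqrt{\mu}\|\dot{\Xb}\|^2 - \mu\,(\Xb-\Xb^\ast)^\top\dot{\Xb} - \sqrt{\mu}\,(\Xb-\Xb^\ast)^\top\nabla f(\Xb).
\]

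\emph{Step 2: use strong convexity.} Apply Assumption \ref{ass-strcon} at the pair $(\Xb,\Xb^\ast)$. This gives
\[
(\Xb-\Xb^\ast)^\top\nabla f(\Xb) \ge f(\Xb)-f^\ast + \frac{\mu}{2}\|\Xb-\Xb^\ast\|^2 .
\]
Inserting it into Step 1 and comparing with the expansion of
\[
-\sqrt{\mu}\,\mathcal{E} = -\sqrt{\mu}(f-f^\ast) - \frac{\sqrt{\mu}}{2}\|\dot{\Xb}\|^2 - \mu(\Xb-\Xb^\ast)^\top\dot{\Xb} - \frac{\mu^{3/2}}{2}\|\Xb-\Xb^\ast\|^2,
\]
we get
\[
\dot{\mathcal{E}} + \sqrt{\mu}\,\mathcal{E} \le -\frac{\sqrt{\mu}}{2}\|\dot{\Xb}\|^2 \le 0 .
\]

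\emph{Main obstacle.} The only delicate point is choosing the weight in front of $(\Xb-\Xb^\ast)$ in the kinetic term so that the cross terms $(\Xb-\Xb^\ast)^\top\dot{\Xb}$ cancel exactly against the friction $2\sqrt{\mu}$. The choice $\sqrt{\mu}$, equal to half the friction coefficient, does this. With it, the rate $\sqrt{\mu}$ comes out without any dependence on $L$, so the remaining steps are routine.

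For its later use in proving Lemma \ref{lem-sc-2}, I would apply the same argument verbatim to \eqref{eq:DGDO-N-SC-ii_com}. There $\Theta_\beta$ plays the role of $f$ and $\mu_\theta$ plays the role of $\mu$, since $\Theta_\beta$ was shown above to be $\mu_\theta$-strongly convex.
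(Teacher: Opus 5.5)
Your proof is correct. The derivative in Step 1, the strong-convexity inequality at $(\Xb,\Xb^\ast)$, and the comparison giving $\dot{\mathcal{E}}+\sqrt{\mu}\,\mathcal{E}\le-\frac{\sqrt{\mu}}{2}\|\dot{\Xb}\|^2$ all check out.

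Your route differs from the paper's only in that the paper gives no argument here; it imports the lemma by citing \cite{NEST}. Your self-contained energy proof buys an explicit constant. Since $\nabla f(\Xb^\ast)=0$, strong convexity gives $\frac{\mu}{2}\|\Xb(0)-\Xb^\ast\|^2\le f(\Xb(0))-f^\ast$. Hence $\mathcal{E}(0)\le 2\bigl(f(\Xb(0))-f^\ast\bigr)$, and
\[
f(\Xb(t))-f^\ast\le 2\bigl(f(\Xb(0))-f^\ast\bigr)e^{-\sqrt{\mu}\,t}.
\]
This is exactly the factor $2$ the paper uses without justification in \eqref{eq:sc_dotv}, so you should add this line. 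You also settle well-posedness and show that the rate does not depend on $L$; the citation leaves both implicit.

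One caution about your last paragraph. The verbatim transfer to \eqref{eq:DGDO-N-SC-ii_com} rests on $\Theta_\beta$ being $\mu_\theta$-strongly convex, and the paper's case split does not prove this. The bound $\Vb^\top\Lc\Vb\ge\lambda_2\|\Vb\|^2$ holds only for $\Vb$ orthogonal to $\col(\mathbf{1}_n\otimes\mathbf{I}_d)$, not for every $\Vb$ outside that space.

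The claim can in fact fail. Take two nodes joined by an edge of weight $a$, so $\lambda_2=2a$. Let $f_1\equiv 0$ and $f_2(x)=x^2$, so $\mu=L=2$, and take any admissible $\beta<a/64$. Then $\mu_\theta=\beta$, yet the Hessian of $\Theta_\beta$ has smallest eigenvalue $a+\beta-\sqrt{a^2+\beta^2}<\beta$.

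Your method still adapts if $\Theta_\beta$ is $m$-strongly convex for some $0<m<\mu_\theta$ while the friction stays $2\sqrt{\mu_\theta}$. Use the weight $c=\sqrt{\mu_\theta}+\sqrt{\mu_\theta-m}$ in place of $\sqrt{\mu}$. The cross terms again cancel, and $\dot{\mathcal{E}}\le-\bigl(\sqrt{\mu_\theta}-\sqrt{\mu_\theta-m}\bigr)\mathcal{E}$. That is exponential decay, but strictly slower than rate $\sqrt{\mu_\theta}$.

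None of this affects the lemma itself.
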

According to Lemma \ref{thm-scODE}, assuming $\Theta_\beta^\ast$ is the minimal value of $\Theta_\beta$, we know that the trajectory $\tilde{\Xb}(t)$ generated by \eqref{eq:DGDO-N-SC-ii_com} satisfies
\begin{equation}
\label{eq:sc_dotv}
\ba
&\Theta_\beta(\Xt(t)) - \Theta_\beta^\ast \leq 2 e^{-\sqrt{\mu_\theta} t} (\Theta_\beta(\Xt(0)) - \Theta_\beta^\ast).
\ea
\end{equation}

Assume $\tilde{\Xb}_\beta^\ast \in \mathbb{R}^{nd}$ satisfies $\Theta_\beta(\tilde{\Xb}_\beta^\ast) = \Theta_\beta^\ast$. Since $f_i > -\infty$, we know that $\ft$ has a minimal value $\ft^\ast$, and consensus between nodes, i.e., $\Xb_1(t)=\Xb_2(t)=\dots=\Xb_n(t)$, is not required for this value. Without loss of generality, we assume $\ft(\tilde{\Xb}_{\rm nc}^\ast) = \ft^\ast$ for some $\tilde{\Xb}_{\rm nc}^\ast \in \mathbb{R}^{nd}$.
Since $\nabla f_i$ is $L$-Lipschitz continuous, it follows that $\nabla \tilde{f}$ is $\sqrt{n} L$-Lipschitz continuous. Hence, we conclude that
\begin{equation}
    \label{eq:fact2}
\ba
&\| \nabla \ft(\Xt(t))\|^2 \\
\leq & 2n L^2 \|\Xt(t) - \Xt_\beta^\ast\|^2 + 2n L^2 \|\Xt_{\rm nc}^\ast - \Xt_\beta^\ast\|^2 \\
\leq & e^{-\sqrt{\mu_\theta} t} \frac{8n L^2}{\mu_\theta} (\Theta_\beta(\Xt(0)) - \Theta_\beta^\ast) + D
\ea
\end{equation}
for $D := 2n L^2 \sup_{\beta\geq0}\|\Xt_{\rm nc}^\ast - \Xt^\ast_\beta\|^2$ and $\Xt^\ast := \mathbf{1}_n \otimes \Xb^\ast$, where the first inequality uses the fact that $\nabla \tilde{f}(\Xt_{\rm nc}^\ast) = \mathbf{0}$ and the last inequality follows from the fact 
\[
\Theta_\beta(\Xt(t)) - \Theta_\beta^\ast \geq \frac{\mu_\theta}{2} \|\Xt(t) - \Xt_\beta^\ast\|^2
\]
and  \eqref{eq:sc_dotv}. 

Next, we compute
\begin{equation}
\label{eq:fact1}
\ba
    &\ft(\mathbf{1}_n \otimes \Xb_i(t)) - \ft(\Xt(t)) \\
\leq & \nabla \ft(\Xt(t))^\top (\mathbf{1}_n \otimes \Xb_i(t) - \Xt(t)) + \frac{L^2}{2} \|\mathbf{1}_n \otimes \Xb_i(t) - \Xt(t)\|^2 \\
\leq & 2\beta \|\nabla \ft(\Xt(t))\|^2 + \left(\frac{\lambda_2}{8\beta} + \frac{L^2}{2}\right) \|\mathbf{1}_n \otimes \Xb_i(t) - \Xt(t)\|^2 \\
\leq & \frac{16n L^2 \beta (\Theta_\beta(\Xt(0)) - \Theta_\beta^\ast)}{\lambda_2 \mu_\theta} e^{-\sqrt{\mu_\theta} t} + \beta \frac{D}{\lambda_2} \\
& + \left(\frac{\lambda_2}{8\beta} + \frac{L^2}{2}\right) \xi_i^2(t)
\ea
\end{equation}
for all $i \in \mV$ and $\xi_i(t) := \|\mathbf{1}_n \otimes \Xb_i(t) - \Xt(t)\|$, where the last inequality follows from \eqref{eq:fact2}.
Thus, for all \(i \in \mV\), we have
\begin{equation}
    \label{eq:inter}
    \ba
    &\ft(\mathbf{1}_n \otimes \Xb_i(t)) - f^\ast \\
    \leq & \frac{\Theta_\beta(\Xt(t)) - \Theta_\beta^\ast}{\beta} + \ft(\mathbf{1}_n \otimes \Xt_i(t)) - \ft(\Xt(t)) - \frac{1}{2\beta} \Xt^\top \Lc \Xt \\
    \leq & \left( \frac{\Theta_\beta(\Xt(0)) - \Theta_\beta^\ast}{\beta} + \frac{16n L^2 \beta (\Theta_\beta(\Xt(0)) - \Theta_\beta^\ast)}{\lambda_2 \mu_\theta} \right) e^{-\sqrt{\mu_\theta} t} \\
    & + \beta \frac{D}{\lambda_2} + \left( \frac{L^2}{2} - \frac{\lambda_2}{8\beta} \right) \xi_i^2(t),
    \ea
\end{equation}
where the first inequality follows from the fact that
\[
\ba
&\Theta_\beta^\ast \leq \Theta_\beta(\mathbf{1}_n \otimes \Xb^\ast)\\ =& \beta \tilde{f}(\Xb^\ast) + \frac{1}{2} (\mathbf{1}_n \otimes \Xb^\ast)^\top \Lc (\mathbf{1}_n \otimes \Xb^\ast) = \beta f^\ast
\ea
\]
derived from \eqref{eq:LV}, and the second inequality is obtained using \eqref{eq:fact1} and noting that
\begin{equation}
\label{eq:boundsx}
\Xt(t)^\top \Lc \Xt(t) \geq \frac{\lambda_2}{2} \xi_i^2(t).
\end{equation}

Then, {letting} 
\[
\beta \leq \min\left\{ \frac{\lambda_2}{4L^2}, \frac{\mu \lambda_2}{16n^2 L^2} \right\},
\]
we obtain
\begin{equation}
    \label{eq:obj}
    \ba
    &\ft(\mathbf{1}_n \otimes \Xb_i(t)) - f^\ast \\
    \leq & \frac{2 (\Theta_\beta(\Xt(0)) - \Theta_\beta^\ast)}{\beta} e^{-\sqrt{\mu_\theta} t} + \beta \frac{D}{\lambda_2} \\
    \leq & \left( 2 (\ft(\Xt(0)) - \ft^\ast) + \frac{\Xt(0)^\top \Lc \Xt(0)}{\beta} \right) e^{-\sqrt{\mu_\theta} t} + \beta \frac{D}{\lambda_2},
    \ea
\end{equation}
where the last inequality follows from the observation that 
\[
\Theta_\beta(\Xt(t)) - \Theta_\beta^\ast = \beta \ft(\Xt(0)) - \Theta_\beta^\ast \leq \beta \ft(\Xt(0)) - \beta \ft^\ast.
\]

With the initial condition \(\Xb_i(0) = \Xb_j(0)\) for all \(i,j \in \mV\), we have
\[
    \ba
    &\ft(\mathbf{1}_n \otimes \Xb_i(t)) - f^\ast \\
    \leq & 2 (\ft(\Xt(0)) - \ft^\ast) e^{-\sqrt{\mu_\theta} t} + \beta \frac{D}{\lambda_2}.
    \ea
\]
Recalling the definition of \(\ft\), we complete the proof of Lemma  \ref{lem-sc-2} with \(B := 2 (\ft(\Xt(0)) - \ft^\ast)\) and \(D = 2n L^2 \sup_{\beta \geq 0} \|\Xt_{\rm nc}^\ast - \Xt^\ast_\beta\|^2\).

\section{Proof of Theorem \ref{thm-sc-dis}}
\label{app:thm2}
Defining $\xt_k:=[\xb_{1,k};\dots;\xb_{n,k}]$ and $\yt_k:=[\yb_{1,k};\dots;\yb_{n,k}]$, DNGD-SC \eqref{eq:DNGD-SC} can be written compactly as
\begin{equation}
\label{eq:DNGD-SC_com}
\ba
\xt_{k+1} &= \yt_{k} - \eta \nabla\Theta_\beta(\yt_{k}), \\
\yt_{k} &= \xt_{k} + \frac{k-1}{k+1} \left( \xt_{k} - \xt_{k-1} \right)
\ea
\end{equation}
for $\Theta_\beta(\xt_k) := \frac{1}{2} \xt_k^\top \Lc \xt_k + \beta \ft(\xt_k)$.
As Assumption \ref{ass-strcon} holds, according to the proof of Lemma  \ref{lem-sc-2} in Section \ref{sec.hfb}, we know $\Theta_\beta$ is $\mu_\theta$-strongly convex for $\mu_\theta=\min\left\{\lambda_2,\frac{\beta\mu}{n}\right\}$. In addition, as $\nabla f_i$ is $L$-Lipschitz continuous, it follows that $\nabla \tilde{f}$ is $\sqrt{n} L$-Lipschitz continuous. Then we have 
\[
\ba
&\|\nabla\Theta_\beta(\xb)-\nabla\Theta_\beta(\xb')\|\\ =& \|\Lc\xb-\Lc\xb'+\beta\ft(\xb)-\beta\ft(\xb')\|\\\leq& (\lambda_n+\beta \sqrt{n}L )\|\xb-\xb'\|
\ea
\]
for all $\xb,\xb'\in\mathbb{R}^{nd}$, which yields $\nabla\Theta_\beta$ is $\lambda_n+\beta \sqrt{n}L$-Lipschitz continuous.

As
\(\eta = \frac{1}{\lambda_n + \beta\sqrt{n} L}\), \(\alpha = 2\sqrt{\min\left\{\lambda_2, \frac{\beta \mu}{n}\right\}}\), it can be noticed that $\Theta_\beta$ is $\frac{\alpha^2}{2}$-strongly convex and $\nabla\Theta_\beta$ is $\frac{1}{\eta}$-Lipschitz continuous.
Following \cite{NEST}, it can be concluded that $\xb_{i,k}$ generated in \eqref{eq:DNGD-SC} satisfies that
\[
\Theta_\beta(\xt_{k}) - \Theta_\beta^\ast \leq 2 (\Theta_\beta(\xt_{0}) - \Theta_\beta^\ast)\left(1-\frac{\alpha^2}{2\eta}\right)^k ,
\]
where $\Theta_\beta^\ast$ is the minimal value of $\Theta_\beta$.

Next, following the same process in Section \ref{sec.hfb} to obtain
\eqref{eq:obj}, we directly obtain that
\begin{equation}
    \label{eq:obj2}
    \ba
    &\ft(\mathbf{1}_n \otimes \xb_{i,k}) - f^\ast \\
    \leq & \left( 2 (\ft(\xt_0) - \ft^\ast) + \frac{\xt_0^\top \Lc \xt_0}{\beta} \right)\left(1-\frac{\alpha^2}{2\eta}\right)^k + \beta \frac{D}{\lambda_2}
    \ea
\end{equation}
for \(D = 2n L^2 \sup_{\beta \geq 0} \|\Xt_{\rm nc}^\ast - \Xt^\ast_\beta\|^2\).

As \(\xb_{i,0}=\xb_{j,0}\) for all \(i,j\in\mathbb{V}\), we have
\(\xt_0^\top\Lc\xt_0=0\). Therefore, \eqref{eq:obj2} yields
\begin{equation}
    \label{eq:appa}
    \ba
    f(\xb_{i,k})-f^\ast
    \leq
    B\left(
    1-
    \sqrt{
    \frac{
    \min\left\{\lambda_2,\frac{\beta\mu}{n}\right\}
    }{
    \lambda_n+\beta\sqrt{n}L
    }}
    \right)^k
    +\frac{\beta D}{\lambda_2},
    \ea
\end{equation}
where \(B:=2(\ft(\xt_0)-\ft^\ast)>0\), and \(D>0\) is the
constant defined above. Both \(B\) and \(D\) are fixed once the
initial condition and the problem data are given.

Let \(\beta=c_\beta\epsilon\), where \(c_\beta>0\) is independent
of \(\epsilon\) and satisfies
\[
c_\beta\leq\frac{\lambda_2}{2D},
\qquad
c_\beta\epsilon
\leq
\min\left\{
\frac{\lambda_2}{4L^2},
\frac{\mu\lambda_2}{16n^2L^2}
\right\}.
\]
It then follows that
\[
\frac{\beta D}{\lambda_2}
=
\frac{c_\beta\epsilon D}{\lambda_2}
\leq\frac{\epsilon}{2}.
\]
Moreover, for
\[
k>
k_0(\epsilon,c_\beta)
:=
\frac{\ln\left(\frac{\epsilon}{2B}\right)}
{\ln\left(
1-
\sqrt{
\frac{
\min\left\{\lambda_2,\frac{c_\beta\epsilon\mu}{n}\right\}
}{
\lambda_n+c_\beta\epsilon\sqrt{n}L
}
}
\right)},
\]
the first term on the right-hand side of \eqref{eq:appa} satisfies
\[
B\left(
1-
\sqrt{
\frac{
\min\left\{\lambda_2,\frac{c_\beta\epsilon\mu}{n}\right\}
}{
\lambda_n+c_\beta\epsilon\sqrt{n}L
}
}
\right)^k
\leq\frac{\epsilon}{2}.
\]
Consequently, \(f(\xb_{i,k})-f^\ast\leq\epsilon\) for every
\(i\in\mathbb{V}\) and all \(k>k_0(\epsilon,c_\beta)\).

Finally, since \(c_\beta\) is a positive constant independent of
\(\epsilon\), we have \(\beta=\Theta(\epsilon)\). Thus, for
sufficiently small \(\epsilon\),
\[
k_0(\epsilon,c_\beta)
=
\mathcal{O}\left(
\sqrt{\kappa}\,
\frac{\ln\epsilon^{-1}}{\sqrt{\epsilon}}
\right),
\]
which completes the proof.
\color{black}

\section{Proof of Theorem \ref{thm-c-con}}

Flow \eqref{eq:DGDO-N-C} can be written in a compact form as

\begin{equation}
    \label{eq:DGDO-N-C_com}
    \ba
    \ddot{\Xt}+\frac{r}{t}\dot{\Xt}+\Lc\Xt+\frac{1}{t^{3-r}}\Gb(\Xt)=0,
    \ea
\end{equation}
where $r\in[2,3)$ and \(\Gb(\Xt):=[\nabla f_1(\Xb_{1});\dots;\nabla f_n(\Xb_{n})]\).
\color{black}

By Assumption \ref{ass-convex}, the optimal solution set \(\mathbb{S}\) is not empty. Assume \(\Xb^\ast \in \mathbb{S}\) is an optimal solution of Problem \eqref{eq:DO}.
We define a function \(V(\Xt(t), t)\) as
$V(\Xt(t),t)=\frac{1}{2}t^2\Xt(t)\Lc\Xt(t)+t^{r-1}(\ft(\Xt(t))-f^\ast)+\frac{1}{2}\|t\dot{\Xt}(t)+(r-1)\Xt(t)-(r-1)\Xt^\ast\|^2$, where $\Xt^\ast=\mathrm{1}_n\otimes\Xb^\ast $, then we have

\[
\ba
\dot{V}
=& (r-1)t^{r-2}(\ft(\Xt)-f^\ast)-(r-1)t^{r-2}(\Xt-\Xt^\ast)^T\Gb({\Xt)}\\&+(2-r)t\Xt\Lc\Xt.
\ea
\]
\color{black}
Since \(\ft(\tilde{\Xb}^\ast) = f^\ast\) and \(\ft\) is a convex function (being the sum of several convex functions) by Assumption \ref{ass-convex}, and $(2-r)t\Xt\Lc\Xt\leq0$, we have \(\dot{V} \leq 0\).
Therefore, we conclude that

\begin{equation}
\ba
\label{eq:boundf}
\ft(\Xt(t)) - f^\ast \leq &\ \frac{V(\Xt(t), t)}{t^{r-1}} \leq \frac{V(\Xt(0), 0)}{t^{r-1}} = \frac{\|\Xt(0) - \Xt^\ast\|^2}{2t^{r-1}},
\ea
\end{equation}
and
\begin{equation}
\label{eq:xlxbounded}
\ba
\frac{1}{2} t^2 \Xt(t)^\top \Lc \Xt(t) + t^{r-1} (\ft(\Xt(t)) - f^\ast) \leq &\ \frac{r-1}{2} \|\Xt(0) - \Xt^\ast\|^2.
\ea
\end{equation}
\color{black}
This implies that \(\lim_{t \to \infty} \Xt(t)^\top \Lc \Xt(t) = 0\), as \(\ft(\Xt(t)) - f^\ast > \ft^\ast - f^\ast\). Thus, from \eqref{eq:boundf}, we directly obtain \(\limsup_{t \to \infty} \ft(\Xt(t)) - f^\ast = 0\), which implies that \(\Xt(t)\) converges to the set \(\mathbb{S}\). 

Since \(\mathbb{S}\) is compact and non-empty by Assumption \ref{ass-convex}, we know that \(\|\Xt(t)\|\) is bounded for all \(t \geq 0\). Moreover, since \(\nabla \ft\) is Lipschitz continuous, there exists some constant \(G > 0\) such that \(\|\nabla \ft(\Xt(t))\| \leq G\) for all \(t \geq 0\).
For each \(i \in \mathbb{V}\), we have
\begin{equation}
    \label{eq:inter2}\ba
&f^\ast-\ft(\Xt(t))\\=& f^\ast-\ft(\mathbf{1}_n\otimes\Xb_i(t))+\ft(\mathbf{1}_n\otimes\Xb_i(t))-\ft(\Xt(t))\\
\leq& \nabla \ft(\Xt(t))^\top(\mathbf{1}_n\otimes\Xb_i(t))-\Xt(t))\\&+\frac{\sqrt{n}L}{2}\|\mathbf{1}_n\otimes\Xb_i(t))-\Xt(t)\|^2\\\leq&G\xi(t)+\frac{\sqrt{n}L}{2}\xi^2(t)
\leq  B  \xi(t)
\ea
\end{equation}
where \(\xi(t) := \max_{i\in\mV}\| \mathbf{1}_n \otimes \Xb_i(t) - \Xt(t) \|\) and \(B := G + \frac{\sqrt{n} L}{2} \sup_{t \geq 0} \xi(t)\), given that $\xi(t)$ is bounded as \(\|\Xt(t)\|\) is bounded for all \(t \geq 0\). The first inequality follows from \(f^\ast - \ft(\mathbf{1}_n \otimes \Xb_i(t)) \leq 0\) and from the fact that \(\nabla \tilde{f}\) is \(\sqrt{n} L\)-Lipschitz continuous. The second inequality uses \(\|\nabla \ft(\Xt(t))\| \leq G\) for all \(t \geq 0\).

{Substituting} \eqref{eq:inter2} and the fact $\Xt(t)^\top \Lc \Xt(t) \geq \frac{\lambda_2}{2} \xi_i^2(t)$ into \eqref{eq:xlxbounded}, we obtain

\[
\frac{\lambda_2}{4} (t \xi(t))^2 - B t^{r-1} \xi(t) - \frac{r-1}{2} \|\Xt(0) - \Xt^\ast\|^2 \leq 0,
\]
which leads to the bound
\begin{equation}
    \ba
    \label{eq:boundxi}
    \xi(t) \leq \frac{2B t^{r-2} + \sqrt{4B^2 t^{2r-4} + 2\lambda_2 \|\Xt(0) - \Xt^\ast\|^2}}{t \lambda_2}.
    \ea
\end{equation}

\color{black}






Finally, we can obtain that for all \(i \in \mathbb{V}\), the following holds,

\[
\ba
& \ft(\mathbf{1}_n \otimes \Xt_i(t)) - f^\ast \\
\leq & \ft(\mathbf{1}_n \otimes \Xt_i(t)) - \ft(\Xt(t)) + \ft(\Xt(t)) - f^\ast \\
\leq & \frac{1}{t} \left( \frac{2B^2 t^{r-2} + B\sqrt{4B^2 t^{2r-4} + 2\lambda_2 \|\Xt(0) - \Xt^\ast\|^2}}{\lambda_2} \right)\\&+\frac{1}{t} \left( \frac{\|\Xt(0) - \Xt^\ast\|^2}{2t^{r-2}} \right),
\ea
\]
\color{black}
where the last inequality follows from \eqref{eq:boundf}, \eqref{eq:inter2}, and \eqref{eq:boundxi} {Thus, the proof of \eqref{eq.thm2_a} is complete by recalling the definition of \(\ft\), \eqref{eq.thm2_b} can be obtained by substituting $r=2$.}

\section{Proof of Theorem \ref{thm-c-dis}}
\label{app:thm4}
DNGD-C \eqref{eq:DNGD-C} can be compactly written as 
\begin{equation}
\label{eq:DNGD-C_com}
\ba
\xt_{k+1} &= \yt_{k} - \eta \Lc \yt_{k} - \frac{\sqrt{\eta}}{k+1} \Gb(\yt_{k}), \\
\yt_{k} &= \xt_{k} + \frac{k-1}{k+1} \left( \xt_{k} - \xt_{k-1} \right).
\ea
\end{equation}

We define a Lyapunov function as
$V_{k}=V(k,\xt_k)=\frac{\eta}{2}k^2\xt_k\Lc\xt_k+\sqrt{\eta} k(\ft(\xt_k)-f^\ast)+\frac{1}{2}\|(k+1)\yt_k-k\xt_k-\Xt^\ast\|^2$.
Then, we have the following expression for \( V_{k+1} - V_k \),
\begin{equation}
\ba
\label{eq:Vdif_1}
&V_{k+1} - V_k 
\\= & \, \eta \left[ \left(k + \frac{1}{2}\right) \xt_{k+1}^\top \Lc \xt_{k+1} + \frac{k^2}{2} \left( \xt_{k+1}^\top \Lc \xt_{k+1} - \xt_k^\top \Lc \xt_k \right) \right] \\
   & + \sqrt{\eta} \left[ f(\xt_{k+1}) - f^\ast + k \left( f(\xt_{k+1}) - f(\xt_k) \right) \right] \\
   & - \left( \eta (k+1) \Lc \yt_k + \sqrt{\eta} \nabla \Gb(\yt_k) \right)^\top \left( k (\yt_k - \xt_k) + \yt_k - \Xt^\ast \right) \\
   & + \frac{\eta}{2} \left\| \sqrt{\eta} (k+1) \Lc \yt_k + \nabla \Gb(\yt_k) \right\|^2.
\ea
\end{equation}

We now introduce two important inequalities. If \( \sqrt{\eta} \leq \frac{\sqrt{L^2 + 4 \lambda_n} - L}{2 \lambda_n} \), then it holds
\begin{equation}
    \label{eq:thm2fact1}
    \ba
    &\frac{1}{k+1} f(\xt_{k+1}) + \frac{\sqrt{\eta}}{2} \xt_{k+1}^\top \Lc \xt_{k+1} 
    \\\leq & \, \frac{1}{k+1} f(\xt_k) + \left( \yt_k - \xt_k \right)^\top \left( \sqrt{\eta} \Lc \yt_k + \frac{1}{k+1} \Gb(\yt_k) \right) \\
    & - \frac{\sqrt{\eta}}{2} \left\| \sqrt{\eta} \Lc \yt_k + \frac{1}{k+1} \Gb(\yt_k) \right\|^2 + \frac{\sqrt{\eta}}{2} \xt_k^\top \Lc \xt_k,
    \ea
\end{equation}
and
\begin{equation}
    \label{eq:thm2fact2}
    \ba
    &\frac{1}{k+1} f(\xt_{k+1}) + \frac{\sqrt{\eta}}{2} \xt_{k+1}^\top \Lc \xt_{k+1} 
    \\\leq & \, \frac{1}{k+1} f^\ast + \left( \yt_k - \Xt^\ast \right)^\top \left( \sqrt{\eta} \Lc \yt_k + \frac{1}{k+1} \Gb(\yt_k) \right) \\
    & - \frac{\sqrt{\eta}}{2} \left\| \sqrt{\eta} \Lc \yt_k + \frac{1}{k+1} \Gb(\yt_k) \right\|^2 - \frac{\sqrt{\eta}}{2} \yt_k^\top \Lc \yt_k.
    \ea
\end{equation}

{\em Proof of \eqref{eq:thm2fact1} and \eqref{eq:thm2fact2}}:
We define a function \( \Phi_k(\xt) := \frac{1}{k+1} f(\xt) + \frac{\sqrt{\eta}}{2} \xt^\top \Lc \xt \). It can be observed that \( \Phi_k(\xt) \) is convex by Assumption \ref{ass-convex}, and the gradient \( \nabla \Phi_k(\xt) \) is Lipschitz continuous with Lipschitz constant \( L_\Phi := L + \sqrt{\eta} \lambda_n \). Therefore, for all \( \yt, \zt \in \mathbb{R}^{nd} \), we have the following inequalities,
\[
\ba
&\Phi_k(\yt) \leq  \, \Phi_k(\zt) + \nabla \Phi_k(\yt)(\yt - \zt), \\
&\Phi_k(\yt) \leq \, \Phi_k(\Xt^\ast) - \frac{\sqrt{\eta}}{2} \yt^\top \Lc \yt + \nabla \Phi_k(\yt)(\yt - \Xt^\ast), \\
&\Phi_k(\yt - \sqrt{\eta} \nabla \Phi_k(\yt)) \leq \, \Phi_k(\yt) + \left( \frac{\eta L_\Phi}{2} - \sqrt{\eta} \right) \| \nabla \Phi_k(\yt) \|^2.
\ea
\]
By setting \( \yt = \yt_k \) and \( \zt = \xt_k \) in the first and third inequalities, we obtain \eqref{eq:thm2fact1}. Similarly, by setting \( \yt = \yt_k \) and \( \zt = \Xt^\ast \) in the second and third inequalities, we obtain \eqref{eq:thm2fact2}.  \hfill$\square$

Substituting \( \eqref{eq:thm2fact1} \times (k^2 + k) \) and \( \eqref{eq:thm2fact2} \times (k+1) \) into \eqref{eq:Vdif_1}, we obtain the following bound for \( V_{k+1} - V_k \) as
\[
V_{k+1} - V_k \leq \eta \left( \frac{k}{2} \xt_k^\top \Lc \xt_k - \frac{k+1}{2} \yt_k^\top \Lc \yt_k \right).
\]

Next, we will prove that \( V_{k+1} - V_k \leq 0 \), which is equivalent to showing that \( (k+1)g(\yt_k) > k g(\xt_k) \) with \( g(\xt_k) := \yt_k^\top \Lc \yt_k \), a convex function.

Since \( (k+1)g(\yt_k) > k g(\xt_k) \) holds for \( k = 0 \), we now prove that \( (k+1)g(\yt_k) > k g(\xt_k) \) holds if \( k g(\yt_k) > (k-1) g(\xt_k) \) holds for \( k \geq 1 \). We consider two cases.

\begin{itemize}
    \item [(i)] If \( g(\yt_k) \geq g(\xt_k) \), the result follows immediately.
    \item [(ii)]  If \( g(\yt_k) < g(\xt_k) \), we have
   \[
   \ba
   &(k+1)g(\yt_k) - k g(\xt_k) 
   \\=& (k+1)(g(\yt_k) - g(\xt_k)) + g(\xt_k) \\
   \geq& (k+1) \nabla g(\xt_k)(\yt_k - \xt_k) + g(\xt_k) \quad  \\
   \geq& (k-1) \nabla g(\xt_k)(\xt_k - \xt_{k-1}) + g(\xt_k) \quad  \\
   \geq& k g(\xt_k) - (k-1) g(\xt_{k-1}) \\
   \geq& k g(\yt_k) - (k-1) g(\xt_{k-1}) 
   \geq 0,
   \ea
   \]
   where the first and third inequalities use the convexity of \( g \), and the second inequality follows from \eqref{eq:DNGD-C_com}. Thus, we have proven that \( (k+1)g(\yt_k) > k g(\xt_k) \).
Therefore, we conclude that \( V_{k+1} - V_k \leq 0 \), meaning that \( V_k \) is bounded.
\end{itemize}

The remaining proof of \eqref{eq:concl} is similar to that of Theorem \ref{thm-c-con}, so it is omitted here.

Finally, we analyze the case where \( \eta = \left( \frac{\sqrt{L^2 + 4\lambda_n} - L}{2\lambda_n} \right)^2 \). We have
\[
\ba
&f(\xb_{i,k}) - f^\ast\\\leq& \frac{2\kappa B}{k\sqrt{L^2+4\lambda_2\kappa}-L}\big(2B+\sqrt{4B^2+2\lambda_2\sum_{i\in\mV}\|\xb_{i,0}-\Xb^\ast\|^2}\\&+\lambda_2\sum_{i\in\mV}\|\xb_{i,0}-\Xb^\ast\|^2\big).
\ea
\]
Thus, we conclude that for any fixed \( \lambda_2 \), it holds
\[
f(\xb_{i,k}) - f^\ast = \mathcal{O} \left( \frac{\sqrt{\kappa}}{k} \right).
\]


\begin{thebibliography}{99}





\bibitem{magnusbook} M. Mesbahi and M. Egerstedt. {\em Graph Theoretic Methods in Multiagent Networks}. Princeton University Press, 2010.

 \bibitem{martinez07} S. Martinez, J. Cort\'{e}s, and F. Bullo, ``Motion Coordination with Distributed
Information,'' {\em IEEE Control Systems Magazine}, vol. 27, no. 4, pp.
75-88, 2007.

\bibitem{kar12} S. Kar, J. M. F. Moura and K. Ramanan, ``Distributed Parameter  Estimation in Sensor Networks:
Nonlinear Observation Models and Imperfect Communication,''  {\em IEEE Transactions on Information Theory}, vol. 58, no. 6, pp. 3575-52, 2012.


\bibitem{Rabbat2010} A. G. Dimakis,  S. Kar,
J. M. F. Moura,  M. G. Rabbat,  and
A Scaglione,  ``Gossip Algorithms for
Distributed Signal Processing,'' {\em Proceedings of IEEE}, vol. 98, no. 11, pp. 1847-1864, 2010.


\bibitem{Patari2022}
N. Patari, V. Venkataramanan, A. Srivastava, D. Molzahn, N. Li, A. Annaswamy, ``Distributed Optimization in Distribution Systems: Use Cases, Limitations, and Research Needs,'' {\em IEEE Transactions on Power Systems}, vol. 37, no. 5, pp. 3469–3481, 2022. 

\bibitem{Koloskova2021}
A. Koloskova, T. Lin, S. U. Stich, ``An Improved Analysis of Gradient Tracking for Decentralized Machine Learning,'' in {\em Proceedings of Neural Information Processing Systems}, pp. 11422–11435, 2021. 

\bibitem{Shen2022}
J. Shen, E. K. H. Kammara, L. Du, ``Nonconvex, Fully Distributed Optimization Based CAV Platooning Control under Nonlinear Vehicle Dynamics,'' {\em IEEE Transactions on Intelligent Transportation Systems}, 2022. 

\bibitem{FDGM} D. Jakovetić, J. Xavier and J. M. F. Moura, ``Fast Distributed Gradient Methods,'' {\em IEEE Transactions on Automatic Control}, vol. 59, no. 5, pp. 1131-1146, 2014.

\bibitem{DNGA}
D. Jakovetić, J. Moura and J. Xavier, ``Distributed Nesterov-like Gradient Algorithms,'' in {\em Proceedings of IEEE Conference on Decision and Control}, pp. 5459-5464, 2012.



\bibitem{chen2025distributed}
Y. Chen, A. L. Fradkov, K. Fu, X. Fu, and T. Li,
``Distributed Stochastic Optimization With Unbounded Subgradients Over Randomly Time-Varying Networks,''
\textit{IEEE Transactions on Automatic Control}, vol. 70, no. 6, pp. 4008--4015, Jun. 2025, doi: 10.1109/TAC.2024.3525182.

\bibitem{FDL}
Brendan McMahan, Eider Moore, Daniel Ramage, Seth Hampson, and Blaise Aguera y Arcas,
``Communication-Efficient Learning of Deep Networks from Decentralized Data,''
in \emph{Proceedings of the 20th International Conference on Artificial Intelligence and Statistics (AISTATS)},
PMLR, volume 54, pp. 1273--1282, 2017.

\bibitem{DSMF}
A. Nedic and A. Ozdaglar, ``Distributed Subgradient Methods for Multi-agent Optimization,'' {\em IEEE Transaction on Automatic Control}, vol. 54, no. 1, pp. 48–61, 2009.


\bibitem{MB-GT}
M. Bin, I. Notarnicola, L. Marconi and G. Notarstefano, ``A System Theoretical Perspective to Gradient-tracking Algorithms for Distributed Quadratic Optimization,'' In {\em Proceedings of IEEE Conference on Decision and Control}, pp. 2994-2999, 2019.


\bibitem{AADS}
H. Hendrikx, F, Bach and L. Massoulié, ``An Accelerated Decentralized Stochastic Proximal
Algorithm for Finite Sums,'' in {\em Proceedings of Neural Information Processing Systems}, vol. 32, pp. 952–962, 2019.



\bibitem{Xu2020}
J. Xu, Y. Tian, Y. Sun, and G. Scutari, ``Accelerated Primal-Dual Algorithms for Distributed Smooth Convex Optimization over Networks,'' in {\em Proceedings of the International Conference on Machine Learning}, vol. 108, pp. 10562–10572, 2020.


\bibitem{ADNG}
G. Qu and N. Li, ``Accelerated Distributed Nesterov Gradient Descent,'' {\em IEEE Transactions on Automatic Control}, vol. 65, no. 6, pp. 2566-2581, 2020.


\bibitem{AGAA}
R. Xin and U. A. Khan, ``Distributed Heavy-Ball: A Generalization and Acceleration of First-Order Methods With Gradient Tracking,'' {\em IEEE Transactions on Automatic Control}, vol. 65, no. 6, pp. 2627-2633, 2020.


\bibitem{ADEF}
W. Su, S. Boyd and E. Candes, ``A Differential Equation for Modeling Nesterov's Accelerated Gradient Method: Theory and Insights,''
{\em Journal of Machine Learning Research}, vol.17, pp. 1-43, 2016.

\bibitem{ROCO}
H. Attouch, Z. Chbani, and H. Riahi, ``Rate of Convergence of the Nesterov
Accelerated Gradient Method in the Subcritical Case $\alpha\leq 3$,'' {\em ESAIM: Control, Optimization Calculus Variations}, vol. 25, pp. 1–34, 2019.

\bibitem{large}
H. Attouch, Z. Chbani, J. Peypouquet, and P. Redont, ``Fast Convergence
 of Inertial Dynamics and Algorithms with Asymptotic Vanishing Viscosity,''
 {\em Mathematical Programming}, vol. 168, no. 1/2, pp. 123–175, 2018.
 














\bibitem{DPDN}
X. Zeng, J. Lei and J. Chen, ``Dynamical Primal-Dual Nesterov Accelerated Method and Its Application to Network Optimization,'' {\em IEEE Transactions on Automatic Control}, vol. 68, no. 3, pp. 1760-1767, 2023.


\bibitem{liang2018}
S. Liang, X. Zeng, and Y. Hong, ``Distributed Nonsmooth Optimization with Coupled Inequality Constraints via Modified Lagrangian Function,'' \emph{IEEE Transactions on Automatic Control}, vol. 63, no. 6, pp. 1753--1759, 2018.

\bibitem{OAFS}
J. Wang and N. Elia, ``A Control Approach to Distributed Optimization,'' in \emph{2010 48th Annual Allerton Conference on Communication, Control, and Computing}, pp. 557-561. 2010,

\bibitem{cort}
B. Gharesifard and J. Cortés, ``Distributed Continuous-time Convex Optimization on Weight-balanced Digraphs,'' {\em IEEE Transactions on Automatic Control}, vol. 59, no. 3, pp. 781-786, 2014.

\bibitem{Yang}
S. Yang, Q. Liu and J. Wang, ``A Multi-agent System with A Proportional-integral Protocol for Distributed Constrained Optimization,'' {\em IEEE Transactions on Automatic Control}, vol. 62, no. 7, pp. 3461-3467, 2017.


\bibitem{Shao}
J. Liu, S. Mou and A. S. Morse, ``An Asynchronous Distributed Algorithm for Solving a Linear Algebraic Equation,'' In {\em Proceedings of IEEE Conference on Decision and Control}, pp. 5409-5414, 2013.

\bibitem{Shi}
G. Shi, B. D. O. Anderson, and U. Helmke, ``Network Flows That Solve Linear Equations,'' \emph{IEEE Transactions on Automatic Control}, vol. 62, no. 6, pp. 2659–2674, 2017.

\bibitem{Gadjov}
D. Gadjov and L. Pavel, ``A Passivity-Based Approach to Nash Equilibrium Seeking Over Networks,'' \emph{IEEE Transactions on Automatic Control}, vol. 64, no. 3, pp. 1077–1092, 2019.


\bibitem{Ye}
M. Ye and G. Hu, ``Distributed Nash Equilibrium Seeking by a Consensus-Based Approach,'' \emph{IEEE Transactions on Automatic Control}, vol. 62, no. 9, pp. 4811–4818, 2017.

\bibitem{LW_SUR}
L. Wang and G. Shi, ``Distributed Flow Computing,'' {\em Foundations and Trends in Systems and Control}, vol. 12, no 1, 2025.

\bibitem{NEST}
Y. Nesterov. {\em Introductory Lectures on Convex Optimization: A Basic Course}, Kluwer Academic Publishers, Boston, 2004.

\bibitem{AVPO}
A. Wibisono, A. Wilson, M. Jordan, ``A Variational Perspective on Accelerated Methods in Optimization,'' {\em Proceedings of the National Academy of Sciences of the United States of America}, vol.113, no. 47, pp. E7351-E7358, 2016.

\bibitem{EXTRA}
W. Shi, Q. Ling, G. Wu, and W. Yin, ``EXTRA: An Exact First-order Algorithm for Decentralized Consensus Optimization,'' {\em SIAM Journal on Optimization}, vol. 25, no. 2, pp. 944-966, 2015.

\bibitem{XY-LCOF}
X. Yi, S. Zhang, T. Yang, T. Chai and K. H. Johansson, ``Linear Convergence of First- and Zeroth-order Primal–dual Algorithms for Distributed Nonconvex Optimization,'' {\em IEEE Transactions on Automatic Control}, vol. 67, no. 8, pp. 4194-4201, 2022.




\bibitem{Standard_function}
G. Wu, R. Mallipeddi and P. Suganthan, ``Problem Definitions and Evaluation Criteria for the CEC 2017 Competition on Constrained Real-Parameter Optimization''. National University of Defense Technology, Kyungpook National University and Nanyang Technological, 2017.

\bibitem{Ren2026} Z. Ren et al., ``Distributed Optimization by Network Flows With Spatio-Temporal Compression,'' in \textit{IEEE Transactions on Automatic Control}, doi: 10.1109/TAC.2026.3654479.

\end{thebibliography}
\end{document}